\documentclass[12pt]{article}
\usepackage{amsfonts}
\usepackage{amsmath}
\usepackage{amssymb}
\usepackage{graphicx}
\usepackage{color}
\usepackage[all, knot]{xy}
\usepackage{tikz}
\usepackage{array}
\usepackage{hyperref}

\usepackage{comment}

\usepackage[utf8]{inputenc}
\usepackage{epstopdf}
\usepackage{subcaption}
\usepackage{amsthm}
\usepackage{enumitem}
\usepackage{mathrsfs}
\usepackage{mathtools}

\usepackage[margin=3cm]{geometry}

\def \be {\begin{equation}}
\def \ee {\end{equation}}
\def \bea {\begin{eqnarray}}
\def \eea {\end{eqnarray}}

\def \rr {\raise.35ex\hbox{\small $\prime$}\kern-.17em{\mbox{\large $\imath$}}}

\def \dels {\partial\kern-.6em /\kern.1em}
\def \As {{A\kern-.5em / \kern.5em}}
\def \Ds {D\kern-.7em / \kern.5em}

\def \ks {k\kern-.5em /}
\def \ls {l\kern-.5em /}

\newcommand{\ba}{\begin{eqnarray}}
\newcommand{\ea}{\end{eqnarray}}
\newcommand{\bal}{\begin{align}}
\newcommand{\eal}{\end{align}}
\newcommand{\bay}[1]{\left(\begin{array}{#1}}
\newcommand{\eay}{\end{array}\right)}

\newcommand{\ket}[1]{|{#1}\rangle}

\newcommand{\hide}[1]{}

\newcommand{\mT}{\mathrm{T}}

\newcommand{\supp}{\operatorname{supp}}
\newcommand{\Span}{\operatorname{Span}}
\newcommand{\Det}{\mathrm{Det}}

\DeclareMathOperator{\Tr}{Tr}
 
\DeclareMathOperator{\rank}{rank}

\newtheorem{proposition}{Proposition}

\newtheorem{lemma}{Lemma}

\newlist{axioms}{enumerate}{2}
\setlist[axioms,1]{label=\textbf{A\arabic{axiomsi}.}, ref=A\arabic{axiomsi}}
\setlist[axioms,2]{label=\textbf{A\arabic{axiomsi}\rlap{\myEnumCounter{axiomsii}}.},%
                   ref=A\arabic{axiomsi}\myEnumCounter{axiomsii},%
                   align=parleft,%
                   leftmargin=0em,%
                   itemsep=1.4ex,%
                   before={\stepcounter{axiomsi}}}

\usetikzlibrary{decorations.markings}

\begin{document}

\begin{titlepage}

\thispagestyle{empty}

\begin{center}
\textbf{\Large
Separability from Multipartite Measures
}\\
\vspace{1.5cm}
{\large
Chen-Te Ma$^{a,}$\footnote{e-mail address: \href{mailto:yefgst@gmail.com}{yefgst@gmail.com}} 
and
Ma-Ke Yuan$^{b,}$\footnote{e-mail address: \href{mailto:mkyuan19@fudan.edu.cn}{mkyuan19@fudan.edu.cn}} 
\\
\vspace{0.5cm}
}
{\it
$^a$Department of Physics, Great Bay University, Dongguan, Guangdong 52300, China. 
\\
$^b$Department of Physics and Center for Field Theory and Particle Physics,\\ 
Fudan University, Shanghai 200433, China.
}\\
\end{center}
\vspace{0.5cm}
\begin{abstract}
\noindent
We show that the third-order negativity provides a necessary and sufficient criterion for full separability of tripartite pure states, and extend this to mixed states beyond bipartite diagnostics such as negativity. 
As a minimal nontrivial example, a four-qubit pure state has three-qubit mixed reductions; its complete characterization requires six bipartite, eight tripartite, and four quadripartite measures, with the third-order negativity serving as a key separability criterion.
We further generalize these separability criteria to multipartite qudit systems and discuss an application to conformal field theory.
\end{abstract}
\end{titlepage}
\setcounter{tocdepth}{3}
{\hypersetup{linkcolor=black}\tableofcontents}
\newpage
\section{Introduction}
\label{sec:1}
\noindent
Quantum entanglement is a phenomenon in which the quantum state of a composite system cannot, in general, be described independently in terms of its subsystems, even when large distances separate them.
In quantum information theory, deciding whether a density matrix is separable, namely, whether it can be written as
\bea
\rho=\sum_jp_j\rho_{1}^{(j)}\otimes\rho_{2}^{(j)}\otimes\cdots\rho_{n}^{(j)} \ ,
\eea
is known as the {\it separability} problem.
Although separable states should contain correlations, these correlations are purely {\it classical} rather than quantum.
In general, the separability problem is believed to be computationally hard.
\\

\noindent
A useful strategy is therefore to construct entanglement measures whose vanishing gives a separability criterion.
The simplest example is the bipartite two-qubit pure state, for which the {\it two-tangle}
\begin{equation}\label{eq:2-tangle-intro}
\tau_{A|B} = 2\big(1 - \Tr (\rho_A^2)\big)
\end{equation}
vanishes if and only if the state is a product state~\cite{Bennett:1996gf,Hill:1997pfa,Wootters:1997id}.
Its key ingredient is the reduced purity $\Tr(\rho_A^2)$, which reaches its maximal value $1$ precisely for {\it product} states and is {\it symmetric} under exchanging $A$ and $B$.
The {\it convex-roof} extension,
\begin{equation}\label{eq:2-tangle-cr}
E_{\tau_{A|B}}(\rho)\equiv
\inf_{\{p_\ell, \ket{\psi_\ell}\}\in\mathbb{D(\rho)}}
\sum_\ell p_\ell\, \tau_{A|B}(\psi_\ell) \ ,
\end{equation}
with $\mathbb{D}(\rho)$ the set of all convex decompositions of $\rho$ into pure states, then provides a natural mixed-state separability criterion~\cite{Bennett:1996gf,Hill:1997pfa,Wootters:1997id}.
This motivates the search for an analogous tripartite quantity: a fully permutation-symmetric measure that vanishes precisely on fully product pure states, admits a natural convex-roof extension to mixed states, and can therefore serve as a useful criterion for full separability.
\\

\noindent
Three-qubit pure states provide the simplest nontrivial setting for genuine tripartite entanglement~\cite{Coffman:1999jd}, already exhibiting features absent in the bipartite case.
In two-qubit systems, any pure state can be brought by local unitary (LU) transformations $\mathrm{SU}(2)^{\otimes \mathtt{2}}$ to the Schmidt form
\bea
|\psi\rangle=\lambda_0|00\rangle+\lambda_1|11\rangle \ , \quad\text{with}\ \ \lambda_j\ge 0 \ ,\ \lambda_0^2+\lambda_1^2=1 \ ,
\eea
reflecting the fact that a single parameter completely characterizes its bipartite entanglement.
However, such a decomposition does not generalize straightforwardly to multipartite systems.
For three-qubit systems, using the {\it generalized} Schmidt decomposition~\cite{Peres:1994qv}, any pure state is LU-equivalent to the canonical form~\cite{Acin:2000jx,Brun:2001gyy}
\begin{equation}\label{eq:Acin}
|\psi\rangle
=
\lambda_0 |000\rangle
+\lambda_1 e^{i\varphi}|100\rangle
+\lambda_2 |101\rangle
+\lambda_3 |110\rangle
+\lambda_4 |111\rangle \ ,
\end{equation}
with $\lambda_j\ge 0$, $0\le \varphi <\pi$, and $\sum_{j=0}^4 \lambda_j^2=1$. 
This already shows the increased structural complexity of {\it tripartite} entanglement.
Beyond bipartite measures, two important genuinely tripartite quantities arise~\cite{Sudbery:2001jxc}:
\begin{itemize}
\item{{\it Three-Tangle}~\cite{Coffman:1999jd}
\begin{equation}\label{eq:CKW}
\tau_{ABC} = \tau_{A|BC} - \tau_{A|B} - \tau_{A|C} \ ,
\end{equation}
which captures the residual entanglement that cannot be reduced to pairwise correlations.
Here $\tau_{A|BC} = 2(1 - \Tr(\rho_A^2))$, while $\tau_{A|B}$ and $\tau_{A|C}$ are defined via the convex-roof extension for mixed states, i.e., Eq.~\eqref{eq:2-tangle-cr}.
}
\item{{\it Third-Order Negativity}~\cite{Sudbery:2001jxc}
\begin{equation}\label{eq:I5-intro}
I_5 = \Tr\!\big[(\rho_{BC}^{\Gamma})^3\big] \ ,
\quad
\rho_{BC} = \Tr_A |\Psi\rangle\langle\Psi| \ ,
\end{equation}
where $\Gamma$ denotes the partial transpose. 
Although $\rho_{BC}^{\Gamma}$ is {\it not} positive in general, it is Hermitian, so $I_5$ is always real.
}
\end{itemize}
For three-qubit pure states, $I_5$ is fully {\it symmetric} under permutations of $A$, $B$, and $C$, making it a natural candidate for a permutation-invariant entanglement diagnostic~\cite{Sudbery:2001jxc}.
Therefore, $\tau_{ABC}$ and $I_5$ illustrate that multipartite entanglement contains structures that cannot be captured solely by bipartite measures.
\\

\noindent 
The {\it Positive Partial Transposition (PPT) criterion} provides a powerful tool for detecting quantum entanglement in bipartite systems~\cite{Peres:1996dw}.
For a density matrix $\rho$, one considers its partial transpose.
If $\rho$ is separable, then all eigenvalues of its partial transpose are nonnegative~\cite{Peres:1996dw}.
Therefore, the presence of any negative eigenvalue signals quantum entanglement~\cite{Peres:1996dw}.
This idea underlies {\it negativity}~\cite{Vidal:2002zz} and {\it logarithmic negativity}~\cite{Plenio:2005cwa}, which are widely-used entanglement measures~\cite{Vedral:1997qn,Vidal:1998re}: 
\begin{itemize}
\item{They {\it vanish} on separable states;}
\item{They are {\it monotonic} under the local operations and classical communication (LOCC)~\cite{Nielsen:1999zza}
\begin{equation}
E\Big(\sum_j p_j \rho_j\Big)\le E(\rho)\ .
\end{equation}
}
\end{itemize}
Moreover, negativity and logarithmic negativity have several practical advantages: they are computable for mixed states, both analytically and numerically~\cite{Chung:2014pwd}, without requiring a convex-roof construction~\cite{Plenio:2005cwa}, and they are relatively accessible in experiments~\cite{vanEnk:2011xlo}.
These features make it especially useful to study the {\it R\'enyi negativities}~\cite{Cornfeld:2018sac,Wu:2019qvm}, whose spectra encode richer information about entanglement. 
However, the PPT criterion has a limitation: It is necessary and sufficient only in {\it low} dimensions, specifically for $2\times 2$ and $2\times 3$ Hilbert spaces~\cite{Horodecki:1996nc}. 
In higher dimensions, there exist entangled states that are PPT but still entangled~\cite{Horodecki:1996nc}.
Thus, PPT is no longer a sufficient criterion for separability in general.
\\

\noindent
In multipartite systems, the situation becomes even more challenging: determining separability typically requires examining {\it all possible bipartitions}, which is highly cumbersome both theoretically and experimentally.
This motivates the search for multipartite entanglement measures that:
\begin{itemize}
\item{provide a {\it single}, unified criterion for separability,}
\item{are {\it permutation symmetric},}
\item{admit a {\it computable formulation}, and}
\item{ideally correspond to a {\it single} observable or {\it measurement protocol}.}
\end{itemize}
Establishing such measures—such as the {\it third-order negativity} $I_5$—offers a promising route to simplifying the detection of multipartite entanglement and to understanding its structure beyond bipartite correlations.
\\

\noindent
The third-order negativity $I_5$ has two notable advantages: first, it is experimentally \textit{accessible}, for example via randomized measurements or tomography-inspired protocols~\cite{Carteret:2005rbv,Elben:2020hpu}; second, for three-qubit pure states, saturation of its lower bound identifies the W-state~\cite{Osterloh:2008ddl}, providing a clear physical interpretation and a way to distinguish entanglement classes. However, a key drawback is that $I_5$ does not satisfy the full set of requirements for a proper entanglement measure and is therefore not an entanglement measure in its own right.
To address this issue, Ref.~\cite{Oreshkov:2006gqe} introduced the quantity
\begin{equation}\label{eq:phiABC}
\phi_{ABC} = 69 - \Tr\big[(2\rho_{AB} + \rho_A\otimes I_B + I_A\otimes \rho_B)^3\big] - 3\Tr(\rho_{AB}^2)\ ,
\end{equation}
which is an entanglement measure.
Using identities specific to three-qubit pure states, we show that this quantity can be rewritten in terms of $I_5$ and the tangles:
\begin{equation}\label{eq:phiABC-new}
\phi_{ABC} = 12(1 - I_5) + 27(\tau_{A|B} + \tau_{A|C} + \tau_{B|C}) + \frac{81}{2}\tau_{ABC} \ .
\end{equation}
Eq.~\eqref{eq:phiABC-new} exhibits the manifest permutation symmetry among $A$, $B$, and $C$. 
Furthermore, it shows that although $I_5$ alone is not sufficient, it becomes part of a well-defined entanglement measure oncecombined with specific bipartite and tripartite measures. 
In this sense, $\phi_{ABC}$ provides a bridge between experimentally measurable quantities ($I_5$) and theoretically well-defined entanglement measures.
More importantly, it supports our broader goal of constructing permutation-symmetric quantities that can diagnose multipartite separability and entanglement, which we now discuss in the general multipartite setting.
\\

\noindent
A $\mathtt{q}$-qubit pure state has $2^{\mathtt{q}+1}-2-3\mathtt{q}$ independent real parameters (after removing normalization, global phase, and LU freedoms) when $\mathtt{q}>2$~\cite{Carteret:2000jop}. In particular, this gives $5$ for $\mathtt{q}=3$ and $18$ for $\mathtt{q}=4$. 
This tells us that a complete characterization of four-qubit quantum entanglement requires 18 independent invariants/measures. 
We can organize these 18 quantities into 6 bipartite measures (two-tangles), 4 tripartite measures (three-tangles), 4 ``$\phi$-type'' measures (generalizing the three-qubit construction), and 4 quadripartite measures.
This mirrors the hierarchy: 2-body$\,\oplus\,$3-body$\,\oplus\,$4-body. 
All these 18 quantities are expected to satisfy
\begin{equation}
E\Big(\sum_j p_j \rho_j\Big)\le \sum_j p_j E(\rho_j)\le E(\rho)\ ,
\end{equation}
so that they remain valid entanglement measures under convex-roof extension~\cite{Eisert:2001xyp}.
This guarantees that our pure-state classification extends consistency to mixed states.
\\

\noindent
To figure out the 4 quadripartite measures, we first note that the four-tangle~\cite{Uhlmann:2000ckp,Wong:2000cmz}
\begin{equation}\label{eq:4-tangle}
4|H|^2 = C^2_{1234}(\Psi)\ ,\quad\text{with}\ \ C_{1234}(\Psi) = |\langle\Psi^*|\sigma_y^{\otimes 4}|\Psi\rangle|
\end{equation}
plays the role of a genuine quadripartite invariant, analogous to the two-qubit concurrence $\tau_{A|B} = C^2(\Psi)$ with $C(\Psi) = |\langle\Psi|\sigma_y^{\otimes 2}|\Psi^*\rangle|$.
The key structural identity~\cite{Gour:2010dhm}
\begin{equation}\label{eq:H2-tau}
4|H|^2 = \tau_{A|BCD} + \tau_{B|ACD} + \tau_{C|ABD} + \tau_{D|ABC} - \tau_{AB|CD} - \tau_{AC|BD} - \tau_{AD|BC}
\end{equation}
shows that $4|H|^2$ is not determined solely by lower-order bipartite and tripartite correlations, but instead provides an independent quadripartite contribution. It is therefore a natural ingredient needed to complete the eighteen-parameter set of entanglement measures.
For a positive homogeneous function $\mu$,
\bea
\mu(\lambda\psi)=\lambda^{\eta}\mu(\psi)\ ,
\eea
the condition that the degree $\eta\le 4$ ensures that $\mu$ is an entanglement measure~\cite{Eltschka:2012voz}. Since $4|H|^2$ is degree-4, it is an entanglement measure. 
Another useful observation is that both the two-tangle and the three-tangle can be rewritten in terms of a hyperdeterminant. Therefore, another natural candidate is the four-qubit hyperdeterminant $\Delta$, which is degree-24. 
Even though it is complicated, the key insight is: It can be expressed in terms of {\it lower}-degree invariants~\cite{Tapia:2002du,Luque:2003dun,Miyake:2002vut}, including the degree-4 four-tangle $4|H|^2$, a degree-8 quantity $\Sigma$, a degree-12 quantity $\Pi$, and a degree-6 quantity $W$. We therefore identify $4|H|^2$, $|\Sigma|^{1/2}$, $|\Pi|^{1/3}$, and $|\Delta|^{1/6}$ as four independent quadripartite measures. See Table~\ref{tab:hierarchy-measures} for a summary. 
\\

\noindent
The difficulty of classifying multipartite entanglement comes from the fact that the number of inequivalent entanglement classes grows combinatorially with qubit number. This complexity already appears in the simplest nontrivial cases: three qubits admit 2 inequivalent genuine entanglement classes~\cite{Dur:2000zz}, while four qubits admit 9 inequivalent families~\cite{Verstraete:2002gqj}.
Thus, 
unlike bipartite systems, where quantities such as the entanglement entropy provide a complete ordering for pure states, multipartite entanglement cannot be totally ordered or fully quantified by a single scalar, but necessarily by a hierarchy of structures.
\\

\noindent
Different entanglement measures probe different aspects of correlations. For example, the three-tangle detects GHZ-type tripartite entanglement, $I_5$ is sensitive to W-type structures, and pairwise tangles capture bipartite correlations.
Thus, two states may both be entangled while still being incomparable under a single scalar measure.
This is one of the reasons why multipartite systems exhibit inequivalent entanglement classes.
Because brute-force classification is impossible, the field focuses on representative families of states.
A paradigmatic example is the generalized GHZ-state
\bea
|\widetilde{\text{GHZ}}\rangle=a|00\cdots 0\rangle+b|11\cdots 1\rangle \ ,
\eea
which exhibits purely global entanglement: after tracing out one or more qubits, all reduced density matrices become separable, and no local entanglement survives~\cite{Thapliyal:1998nw}. 
More remarkably, the converse direction also holds in an appropriate sense: detecting the separability of all reduced subsystems effectively identifies a GHZ-type structure. 
In this paper, we will show that the above relation between subsystem separability and GHZ-type states also holds for multipartite \textit{qudit} systems. 
\\

\noindent
This observation motivates a broader question: how far can separability-based characterizations of multipartite entanglement be extended beyond finite-dimensional systems? 
Quantum field theory (QFT) provides a natural but much more difficult arena for this question. 
The theory has infinitely many degrees of freedom, and its entanglement structure is further complicated by ultraviolet (UV) divergences that require regularization.
Moreover, entanglement is generically present in spatial subregions~\cite{Witten:2018zxz}.
Therefore, in local QFT, separability in position space is essentially lost, but it can re-emerge in momentum space.
For example, the vacuum state in Lifshitz theory can exhibit mode-wise separability~\cite{Basak:2023otu}.
\\

\noindent
To study such infinite-dimensional systems, one usually combines the replica trick~\cite{Holzhey:1994we}, twist operators~\cite{Caraglio:2008pk}, and conformal field theory (CFT)~\cite{Ferrara:1973yt,Mack:1975jr}.
This framework provides a systematic way to compute R\'enyi entropy and R\'enyi negativity~\cite{Calabrese:2012ew}.
Conformal symmetry fixes the universal part of these quantities. 
At the same time, theory-dependent data, such as operator product expansion (OPE) coefficients~\cite{Ferrara:1973yt,Mack:1975jr}, can be obtained from anti-de Sitter/conformal field theory (AdS/CFT) correspondence~\cite{Maldacena:1997re,Gubser:1998bc,Witten:1998qj,Lunin:2000yv,Chang:2016ftb}. 
We can also use qudit systems as a controlled bridge to QFT.
This works because the qudit systems mimic QFT subregions from the large local dimension.
Therefore, they provide a clean laboratory for multipartite structures~\cite{Gadde:2022cqi,Gadde:2023zzj} and separability criteria.
\subsection{Summary of Results}
This paper provides a comprehensive exploration of the relationship between separability and multipartite entanglement measures. We introduce the notion of third-order negativity and its multipartite generalizations, together with several applications.
In this sense, our work establishes a concrete bridge between separability criteria and multipartite entanglement measures. Specifically:
\begin{table}[tpb]
\centering
\renewcommand{\arraystretch}{1.25}
\begin{tabular}{|c|c|c|c|}
\hline

& 2-qubit pure state
& 3-qubit pure state
& 4-qubit pure state \\
\hline
Bipartite
& $\tau_{A|B}$
& $\tau_{A|B},\,\tau_{A|C},\,\tau_{B|C}$
& \begin{tabular}{@{}l@{}}
$\tau_{A|B},\,\tau_{A|C},\,\tau_{A|D}$ \\[-0.2em]
$\tau_{B|C},\,\tau_{B|D},\,\tau_{C|D}$
\end{tabular}
\\
\hline
Tripartite
& --
& $\tau_{ABC},\,\phi_{ABC}$
& \begin{tabular}{@{}l@{}}
$\tau_{ABC},\,\tau_{ABD},\,\tau_{ACD},\,\tau_{BCD}$\\[-0.2em]
$\phi_{ABC},\,\phi_{ABD},\,\phi_{ACD},\,\phi_{BCD}$
\end{tabular}
\\
\hline
Quadripartite
& --
& --
& $4|H|^2,\,|\Sigma|^{1/2},\,|\Pi|^{1/3},\,|\Delta|^{1/6}$
\\
\hline
\end{tabular}
\caption{Hierarchical organization of entanglement measures for pure states of two, three, and four qubits.}
\label{tab:hierarchy-measures}
\end{table}
\begin{itemize}
\item
We show that the third-order negativity $I_5$ attains its maximal value~$1$ if and only if the state is fully product~\cite{Carrozza:2026qcf}.
Motivated by the convex-roof construction in the bipartite case, we develop a corresponding mixed-state extension and obtain a necessary and sufficient criterion for full separability of tripartite mixed states.
In this sense, $\phi_{ABC}$~\eqref{eq:phiABC} provides a natural tripartite analog of the two-tangle~\eqref{eq:2-tangle-intro}, while $I_5$ plays a role similar to the reduced purity $\Tr(\rho_A^2)$ in Eq.~\eqref{eq:2-tangle-intro}.
We also prove that, for three-qubit pure states, the saturation of the lower bound of $I_5$ and the upper bound of $\phi_{ABC}$ are achieved uniquely by the W- and GHZ-state, respectively. 
\item
The four-qubit pure state is the simplest nontrivial system whose one-qubit reductions yield tripartite mixed states, making it the natural arena for systematically applying the separability criterion.
We identify a complete set of eighteen independent entanglement measures for four-qubit pure states, as summarized in Table~\ref{tab:hierarchy-measures}.
Among them, $\phi_{ABC}$ and its permutations play a distinguished role: they are closely related to $I_5$ and provide necessary and sufficient criteria for full separability of tripartite states.
Furthermore, we show that the remaining quadripartite measures— $4|H|^2$, $|\Sigma|^{1/2}$, and $|\Pi|^{1/3}$—are all generated by the hyperdeterminant $\Delta$, motivating the search for new entanglement measures derived from $\Delta$.
\item
We demonstrate that among the eighteen measures, the four-tangle $4|H|^2$ is the only one that can remain nonzero when all the other seventeen measures vanish.
This situation is realized uniquely by the generalized GHZ-type four-qubit states.
Consequently, the entanglement of generalized GHZ-states can be quantified unambiguously.
\item
We generalize the relation between separability and third-order negativity to generic finite-dimensional multipartite systems. We also show that the condition that all $(\mathtt{q}-1)$-partite reduced density matrices are fully separable is equivalent to the statement that the $\mathtt{q}$-partite pure state is LU-equivalent to a generalized GHZ-state.
This provides a direct application of multipartite separability criteria to characterize the global entanglement structure. 
This proof also implies that the GHZ-type state is not allowable in the local QFT for the spatial bipartition. 
\item
Finally, we demonstrate an application in QFT by computing the third-order negativity in two-dimensional conformal field theory (CFT$_2$) at large central charge, using the AdS/CFT correspondence to extract the relevant OPE coefficients.
\end{itemize}

\noindent
The paper is organized as follows. In Sec.~\ref{sec:separability}, we establish the relation between third-order negativity and tripartite separability.
In Sec.~\ref{sec:18}, we introduce the complete set of eighteen entanglement measures and analyze their structural properties.
In Sec.~\ref{sec:H}, we show that the four-tangle is the unique independently surviving measure.
In Sec.~\ref{sec:multipartite}, we generalize our results to multipartite systems.
In Sec.~\ref{sec:3rd-nega-CFT}, we present the CFT application of third-order negativity.
Finally, Sec.~\ref{sec:discussion} summarizes our results and discusses future directions. 
The Appendices~\ref{appx:I5-lower} and \ref{appx:phi-upper} provide the detailed proof of the upper bound of $I_5$ and the lower bound of $\phi_{ABC}$.
\section{Separability of Tripartite Systems}
\label{sec:separability}
\noindent
In this section, we show that the third-order negativity provides a criterion for full separability of tripartite systems.
We begin with pure states, where the result takes a simple form: for a finite-dimensional tripartite pure state, $I_5=1$ if and only if the state is fully product.
We then turn to three-qubit mixed states and obtain the criterion that $E_{\phi_{ABC}}=0$ if and only if the state is fully separable, where $\phi_{ABC}$ is a quantity closely related to $I_5$.
Finally, based on $I_5$, we develop another mixed-state extension $E_{1-I_5}$, whose vanishing serves as a criterion for full separability in general finite-dimensional tripartite mixed states.
We also show in Appendices~\ref{appx:I5-lower} and~\ref{appx:phi-upper} that for three-qubit pure states, the lower bound of $I_5$ is $2/9$, achieved by the W-state, and the upper bound of $\phi_{ABC}$ is $99/2$, achieved by the GHZ-state.
Hence, one can determine how close a three-qubit pure state is to the W- and GHZ-state via the measurement of $I_5$ and $\phi_{ABC}$.
\subsection{Tripartite Pure State}
\label{sec:I5-prop}
\noindent
For tripartite pure states $|\Psi\rangle_{ABC}$, the third-order negativity $I_5$~\eqref{eq:I5-intro} can also be expressed through the replica trick as
\begin{equation}\label{eq:I5-replica}
I_5
=
\langle \Psi^{\otimes 3}|
\Omega_A \Omega_B \Omega_C
|\Psi^{\otimes 3}\rangle \ ,\ \ \text{with}\ \
\Omega_A = (1)(2)(3)\ ,
\
\Omega_B = (123) \ ,
\
\Omega_C = (132) \ .
\end{equation}
Here $\Omega_A$, $\Omega_B$, and $\Omega_C$ are permutation operators acting on the replica copies of the corresponding subsystems.
In particular, $\Omega_A=(1)(2)(3)$ is the identity permutation on the three replica copies of subsystem $A$, while $\Omega_B=(123)$ cyclically permutes the three replicas of subsystem $B$, i.e.,
\bea
\Omega_B(1)=2\ ,\quad \Omega_B(2)=3\ ,\quad \Omega_B(3)=1\ ,
\eea
whereas $\Omega_C$ acts in the opposite direction. 
From Eq.~\eqref{eq:I5-replica} one can see that $I_5$ is symmetric among $A$, $B$, and $C$.
We now prove the following proposition:
\begin{proposition}\label{prop:I5-product}
Let $|\Psi\rangle \in \mathcal{H}_A \otimes \mathcal{H}_B \otimes \mathcal{H}_C$ be a normalized 3-qudit pure state, then
\begin{equation}
I_5 = 1 \iff |\Psi\rangle \text{ is a fully product state}\ .
\end{equation}
\end{proposition}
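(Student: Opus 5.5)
The approach is to write $I_5$ as a sum over Schmidt data and compare it with a trivial upper bound. For the easy direction, let $|\Psi\rangle=|a\rangle|b\rangle|c\rangle$. Then $\rho_{BC}=|b\rangle\langle b|\otimes|c\rangle\langle c|$, and its partial transpose on $C$ is $|b\rangle\langle b|\otimes|c^*\rangle\langle c^*|$. This is a rank-one projector, so $I_5=1$.

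For the converse, I bound $I_5$ using the replica form \eqref{eq:I5-replica}. Since $\Omega_A\Omega_B\Omega_C$ is a unitary (a permutation) acting on the normalized state $|\Psi^{\otimes 3}\rangle$, Cauchy--Schwarz gives
\begin{equation}
I_5\le \bigl|\langle \Psi^{\otimes 3}|\Omega_A\Omega_B\Omega_C|\Psi^{\otimes 3}\rangle\bigr| \le 1 \ .
\end{equation}
Equality $I_5=1$ forces $\Omega_A\Omega_B\Omega_C|\Psi^{\otimes 3}\rangle=|\Psi^{\otimes 3}\rangle$. With $\Omega_A=\mathbb{1}$, this means $|\Psi^{\otimes3}\rangle$ is invariant under the operator that cyclically permutes the $B$-replicas as $(123)$ and the $C$-replicas as $(132)$, leaving the $A$-replicas fixed.

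The goal is then to show that this invariance forces a product structure. I would Schmidt-decompose $|\Psi\rangle=\sum_i \sqrt{p_i}\,|i\rangle_A|\phi_i\rangle_{BC}$ with orthonormal $|i\rangle_A$. Because the operator acts trivially on $A$, invariance holds component-wise in the $A$ basis:
\begin{equation}
|\phi_i\rangle|\phi_j\rangle|\phi_k\rangle=\Pi\,|\phi_i\rangle|\phi_j\rangle|\phi_k\rangle \quad\text{whenever } p_ip_jp_k>0 \ ,
\end{equation}
where $\Pi$ denotes the $B$/$C$ replica permutation.

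First take $i=j=k$. Write $|\phi\rangle=\sum_{\beta\gamma}M_{\beta\gamma}|\beta\gamma\rangle$. The overlap $\langle\phi^{\otimes3}|\Pi|\phi^{\otimes3}\rangle$ equals $\Tr[(\rho_{BC}^{\Gamma})^3]$ for the pure state $\phi$, which is $\sum_k s_k^6$ in terms of the Schmidt coefficients $s_k$ of $\phi$. This reaches $1$ only if $\phi$ is a product, so each $|\phi_i\rangle=|b_i\rangle|c_i\rangle$.

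Next take mixed indices, for instance $(i,i,j)$. The overlap is a product of terms of the form $\langle b_i|b_j\rangle\langle b_j|b_i\rangle$ and $\langle c_i|c_j\rangle\langle c_j|c_i\rangle$, and it must have modulus $1$. This forces $|b_i\rangle\propto|b_j\rangle$ and $|c_i\rangle\propto|c_j\rangle$. Since $\langle\phi_i|\phi_j\rangle=\delta_{ij}$, distinct Schmidt vectors cannot be proportional, so only one Schmidt term survives. Hence $|\Psi\rangle=|a\rangle|b\rangle|c\rangle$.

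The main obstacle is this mixed-index step: I have to compute the overlap correctly under the specific permutation pattern to confirm that it really yields $|\langle b_i|b_j\rangle|^2|\langle c_i|c_j\rangle|^2$-type factors. As a cleaner alternative, I would use equality in the first Cauchy--Schwarz step differently. Expand $I_5=\sum_{ijk}p_ip_jp_k\,\langle\phi_i\phi_j\phi_k|\Pi|\phi_i\phi_j\phi_k\rangle$, where each term has modulus at most $1$ and the weights sum to $1$. Then $I_5=1$ forces every term with nonzero weight to equal $1$, which gives exactly the conditions above.
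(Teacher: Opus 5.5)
Your proof is correct, and it takes a genuinely different route from the paper's.

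The paper argues spectrally. It shows $\lambda_{\max}(\rho_{BC}^{\Gamma})\le 1$ via a variational argument, using the eigenvalues $s_k^2,\pm s_ks_l$ of partially transposed pure states. From this it gets $I_5=\sum_j\mu_j^3\le\sum_j\mu_j^2=\Tr(\rho_{BC}^2)\le 1$. It then reads off saturation in two stages: $\Tr(\rho_{BC}^2)=1$ splits off $A$, and $\Tr(\rho_B^3)=1$ splits $B|C$.

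You instead get $|I_5|\le 1$ directly from unitarity of the replica permutation. You turn saturation into the fixed-vector condition $\Pi|\Psi^{\otimes 3}\rangle=|\Psi^{\otimes 3}\rangle$, and use the $A|BC$ Schmidt decomposition to split it into conditions on triples of Schmidt vectors. Your diagonal terms reproduce the paper's final step, since $\sum_k s_k^6=\Tr(\rho_B^3)$. Your $(i,i,j)$ terms play the role of the paper's purity condition.

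The step you flagged as the main obstacle is harmless. Since $\Pi=\Pi_B\otimes\Pi_C$ and $|\phi_i\phi_i\phi_j\rangle=|b_ib_ib_j\rangle|c_ic_ic_j\rangle$, the overlap factorizes into a $B$ part and a $C$ part. Any $3$-cycle sends $(x,x,y)$ to $(y,x,x)$ or $(x,y,x)$, so each factor is $|\langle x|y\rangle|^2$. The overlap is therefore exactly $|\langle b_i|b_j\rangle|^2|\langle c_i|c_j\rangle|^2$. In fact it vanishes outright, because $\langle\phi_i|\phi_j\rangle=\langle b_i|b_j\rangle\langle c_i|c_j\rangle=0$. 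So at most one Schmidt weight can be nonzero, without needing the proportionality argument.

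As for what each route buys: the paper's argument gives the stronger, quantitative inequality $I_5\le\Tr(\rho_{BC}^2)$, valid for arbitrary mixed $\rho_{BC}$. Yours needs no spectral control of $\rho^{\Gamma}$ for mixed states. It is essentially the fixed-vector mechanism the paper itself uses for Proposition~\ref{prop:general-replica-product}, so it adapts more readily to other replica invariants. There, the paper replaces your Schmidt-decomposition bookkeeping with coefficient identities and rank-one flattenings.
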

\begin{proof}
The converse is immediate. For the nontrivial direction, we first prove that for a general bipartite mixed state $\rho_{BC}$,
\begin{equation}\label{eq:I5s1}
I_5
=
\Tr\big[(\rho_{BC}^{\Gamma})^3\big] \le 1\ .
\end{equation}
Let $X = \rho_{BC}^{\Gamma}$.
Since $X$ is Hermitian, its eigenvalues $\mu_j$ are real, and
\begin{equation}
I_5 = \sum_j \mu_j^3 \ .
\end{equation}
We now show that $\mu_j \le 1$ for all $j$.
It is clear that
\begin{equation}
\mu_j\le\lambda_{\max}(X)
=
\max_{\|v\|=1} \langle v|X|v\rangle
=
\max_{\|v\|=1}
\Tr\!\big[\rho_{BC}(|v\rangle\langle v|)^{\Gamma}\big]
\le
\max_{\|v\|=1}
\lambda_{\max}\big((|v\rangle\langle v|)^{\Gamma}\big) \ .
\end{equation}
Now let $|v\rangle = \sum_k s_k |k\rangle_B |k\rangle_C$ be the Schmidt decomposition of $|v\rangle$.
Then the eigenvalues of $(|v\rangle\langle v|)^{\Gamma}$ are $s_k^2$ and $\pm s_k s_l$ with $k<l$.
Hence, we obtain
\begin{equation}
\lambda_{\max}\big((|v\rangle\langle v|)^{\Gamma}\big)
=
\max_k s_k^2
\le 1 \Longrightarrow \lambda_{\max}(X)\le 1 \ ,
\end{equation}
so every eigenvalue $\mu_j$ of $X$ satisfies $\mu_j \le 1$.
It follows that $\mu_j^3 \le \mu_j^2$ for every $j$, and therefore
\begin{equation}
I_5
=
\sum_j \mu_j^3
\le
\sum_j \mu_j^2
=
\Tr(X^2)
=
\Tr(\rho_{BC}^2)
\le 1 \ ,
\end{equation}
where the third equal sign follows from the fact that the partial transpose preserves the Hilbert-Schmidt norm.
Now suppose that $I_5 = 1$.
Since
\begin{equation}
I_5 \le \Tr(\rho_{BC}^2) \le 1 \ ,
\end{equation}
both inequalities must be saturated.
In particular, $\Tr(\rho_{BC}^2) = 1$ implies that $\rho_{BC}$ is pure.
Hence, $|\Psi\rangle$ factorizes across $A|BC$,
\begin{equation}
|\Psi\rangle = |A\rangle \otimes |\phi\rangle_{BC} \ .
\end{equation}
For a bipartite pure state $|\phi\rangle_{BC}$, one has
\begin{equation}
\Tr\!\big[\big((|\phi\rangle\langle\phi|)^{\Gamma}\big)^3\big]
=
\Tr(\rho_B^3)\ ,
\quad
\rho_B = \Tr_C |\phi\rangle\langle\phi| \ .
\end{equation}
Therefore, we have $1=I_5=\Tr(\rho_B^3)$, which implies that $\rho_B$ is also pure.
Thus $|\phi\rangle_{BC}$ is product across $B|C$, and $|\Psi\rangle$ is fully product across $A|B|C$.
\end{proof}

\paragraph{\boldmath Lower bound of $I_5$ for three-qubit pure state.\unboldmath}The lower bound is $I_5 \ge 2/9$, which saturates when the state is W-state~\cite{Osterloh:2008ddl}. See Appendix~\ref{appx:I5-lower} for a detailed proof. 
Therefore, by measuring $I_5$, one can determine how close a 3-qubit pure state is to the W-state. 
\subsection{Three-Qubit Mixed State}
\label{sec:phi-prop}
\noindent
We first give a derivation of Eq.~\eqref{eq:phiABC-new}, which rewrites the quantity $\phi_{ABC}$~\eqref{eq:phiABC} in terms of~$I_5$. It has been shown in Ref.~\cite{Oreshkov:2006gqe} that Eq.~\eqref{eq:phiABC} can be expanded as
\begin{equation}\label{eq:phi-expand}
\phi_{ABC} = 69 - 12I_5 - 16\big[\!\Tr(\rho_A^3) + \Tr(\rho_B^3) + \Tr(\rho_C^3)\big] - 3\Tr(\rho_A^2) - 3\Tr(\rho_B^2) - 3\Tr(\rho_{AB}^2)\ .
\end{equation}
For three-qubit pure states, $\Tr(\rho_{AB}^2) = \Tr(\rho_C^2)$. For normalized density matrix $\rho$ with rank no more than 2, we have $\Tr(\rho^3) = (3\Tr(\rho^2) - 1)/2$. Therefore, 
\begin{equation}\label{eq:phiABC-m}
\phi_{ABC} = 93 - 12I_5 - 27\big[\!\Tr(\rho_A^2) + \Tr(\rho_B^2) + \Tr(\rho_C^2)\big]\ .
\end{equation}
Using the definition of the two-tangle of pure states and Eq.~\eqref{eq:CKW}, we have
\begin{equation}\label{eq:purity3}
\begin{split}
\Tr(\rho_A^2) + \Tr(\rho_B^2) + \Tr(\rho_C^2) &= 3 - \frac{1}{2}(\tau_{A|BC} + \tau_{B|CA} + \tau_{C|AB})\\
&= 3 - (\tau_{A|B} + \tau_{A|C} + \tau_{B|C}) - \frac{3}{2}\tau_{ABC}\ .
\end{split}
\end{equation}
Substituting Eqs.~\eqref{eq:phiABC-m} and \eqref{eq:purity3} into Eq.~\eqref{eq:phi-expand}, we obtain Eq.~\eqref{eq:phiABC-new}.
\\

\noindent
We then show that $\phi_{ABC}$ can serve as a necessary and sufficient criterion for full separability of three-qubit mixed states. 
\begin{proposition}
Let
\begin{equation}\label{eq:mixed-state-phi}
E_{\phi_{ABC}}(\rho)\equiv
\inf_{\{p_\ell,\ket{\psi_\ell}\}\in \mathbb{D}(\rho)}
\sum_\ell p_\ell\, \phi_{ABC}(\psi_\ell) \ ,
\end{equation}
where $\phi_{ABC}$ is defined as Eq.~\eqref{eq:phiABC-new}:
\begin{equation*}
\phi_{ABC} = 12(1 - I_5) + 27(\tau_{A|B} + \tau_{A|C} + \tau_{B|C}) + \frac{81}{2}\tau_{ABC} \ .
\end{equation*}
Then, for any three-qubit mixed state $\rho$,
\begin{equation}
E_{\phi_{ABC}}(\rho)=0
\iff
\rho \ \text{is fully separable} \ .
\end{equation}
\end{proposition}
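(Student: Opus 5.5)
The proof reduces the mixed-state statement to a pure-state fact: for a three-qubit pure state, $\phi_{ABC}\ge 0$, with equality exactly when the state is fully product. We take $\phi_{ABC}$ in the form of Eq.~\eqref{eq:phiABC-new}. Each summand is nonnegative.
\begin{itemize}
\item By Proposition~\ref{prop:I5-product}, $I_5\le 1$ for any tripartite pure state, so $12(1-I_5)\ge 0$.
\item The two-tangles $\tau_{A|B},\tau_{A|C},\tau_{B|C}$ are convex-roof quantities of the reduced two-qubit states, hence nonnegative.
\item The three-tangle $\tau_{ABC}\ge 0$ by the Coffman--Kundu--Wootters monogamy inequality, which makes the residual in Eq.~\eqref{eq:CKW} nonnegative.
\end{itemize}
So $\phi_{ABC}(\psi)\ge 0$ for every pure $\psi$. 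If $\phi_{ABC}(\psi)=0$, every summand vanishes. In particular $I_5=1$, and Proposition~\ref{prop:I5-product} gives that $\psi$ is fully product. Conversely, for a fully product state we have $I_5=1$, all reductions pure, every $\tau_{X|Y}=0$ and $\tau_{A|BC}=0$, hence $\tau_{ABC}=0$ and $\phi_{ABC}=0$.

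For the direction ($\Leftarrow$): if $\rho=\sum_\ell p_\ell |a_\ell b_\ell c_\ell\rangle\langle a_\ell b_\ell c_\ell|$ is fully separable, this is itself a pure-state decomposition in $\mathbb{D}(\rho)$. Every member has $\phi_{ABC}=0$, so the infimum is at most $0$. Since $\phi_{ABC}\ge 0$, the infimum is at least $0$, and therefore $E_{\phi_{ABC}}(\rho)=0$.

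For the direction ($\Rightarrow$), the main obstacle is that an infimum equal to zero need not a priori be attained by some decomposition. I would handle this by compactness.
\begin{enumerate}
\item By Carathéodory's theorem, decompositions of a state on $\mathbb{C}^8$ can be restricted to at most $64$ (or $(\rank\rho)^2$) elements without loss of generality for the convex-roof infimum, because the average depends only on the point $\sum_\ell p_\ell\,(\psi_\ell,\phi_{ABC}(\psi_\ell))$ in a finite-dimensional convex set.
\item Such decompositions correspond to isometries/unitaries via $\sqrt{p_\ell}\,|\psi_\ell\rangle=\sum_k U_{\ell k}\sqrt{\lambda_k}|e_k\rangle$, so they form a compact set.
\item The quantity $\phi_{ABC}$ is continuous on pure states. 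This follows from its original definition in Eq.~\eqref{eq:phiABC}, a polynomial in the reduced density matrices; that route avoids proving continuity of the individual convex-roof tangles.
\end{enumerate}
Hence the infimum is attained by some decomposition $\{p_\ell,\psi_\ell\}$ with $\sum_\ell p_\ell\,\phi_{ABC}(\psi_\ell)=0$. Nonnegativity then forces $\phi_{ABC}(\psi_\ell)=0$ for every $\ell$ with $p_\ell>0$. By the pure-state characterization above, each such $\psi_\ell$ is fully product, so $\rho$ is fully separable.
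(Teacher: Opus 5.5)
Your proof is correct and follows the same route as the paper. It uses nonnegativity of every summand of $\phi_{ABC}$ (with $I_5\le 1$ from Proposition~\ref{prop:I5-product}), the pure-state characterization via $I_5=1$, and attainment of the convex-roof infimum, so that a zero-average decomposition consists only of fully product states. The difference is that you actually justify the attainment step (Carath\'eodory bound, compactness of the isometry parametrization, continuity of $\phi_{ABC}$ from its polynomial form), whereas the paper simply asserts it from the continuity of $1-I_5$.
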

\begin{proof}
On the one hand, since $\tau_{A|B}$, $\tau_{B|C}$, $\tau_{C|A}$, $\tau_{ABC}$ and $1 - I_5$ are all nonnegative, $\phi_{ABC} = 0$ implies $1 - I_5 = 0$. 
According to Proposition~\ref{prop:I5-product}, since $1-I_5$ is a continuous nonnegative function on the pure-state manifold, there exists a pure-state decomposition $\rho=\sum_j p_j |\psi_j\rangle\langle\psi_j|$ such that
\begin{equation}\label{eq:I5-0}
\sum_j p_j \bigl(1-I_5(\psi_j)\bigr)=0 \ .
\end{equation}
Since each term in Eq.~\eqref{eq:I5-0} is nonnegative, it follows that
\begin{equation}
1-I_5(\psi_j)=0
\end{equation}
for every $j$ with $p_j>0$. 
Hence, every $\ket{\psi_j}$ is fully product, and thus $\rho$ is fully separable. 
On the other hand, if $\rho$ is fully separable, 
\begin{equation}
\tau_{A|B} = \tau_{B|C} = \tau_{C|A} = \tau_{ABC} = 1 - I_5 = 0\ ,
\end{equation}
thus $\phi_{ABC} = 0$. 
\end{proof}

\paragraph{\boldmath Upper bound of $\phi_{ABC}$ for three-qubit pure state.\unboldmath}The upper bound of $\phi_{ABC}$ is $\phi_{ABC} \le 99/2$, which is saturated by the GHZ-state. See Appendix~\ref{appx:phi-upper} for a proof. Hence, the value of $\phi_{ABC}$ also indicates how close a 3-qubit pure state is to the GHZ-state. 

\subsection{Tripartite Mixed State}
\label{sec:mixed-3}
\noindent
The measure $\phi_{ABC}$ is defined for three-qubit states.
For general tripartite mixed states, we can construct a criterion for their full separability based on $I_5$ and the convex-roof construction~\cite{Eisert:2001xyp} as follows.
\begin{proposition}
Let\footnote{We assign the power $2/3$ in Eq.~\eqref{eq:1mI5} to make sure that $E_{1 - I_5}$ is an entanglement measure~\cite{Vedral:1997qn,Vidal:1998re} according to the Theorem~1 in Ref.~\cite{Eltschka:2012voz}. }
\begin{equation}\label{eq:1mI5}
E_{1 - I_5}(\rho)\equiv
\inf_{\{p_\ell,\ket{\psi_\ell}\}\in \mathbb{D}(\rho)}
\sum_\ell p_\ell\, \big(1 - I_5(\psi_\ell)\big)^{2/3} \ .
\end{equation}
Then, for any three-qubit mixed state $\rho$,
\begin{equation}
E_{1 - I_5}(\rho)=0
\iff
\rho \ \text{is fully separable} \ .
\end{equation}
\end{proposition}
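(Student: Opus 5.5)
The plan follows the same two-direction structure as the proof for $E_{\phi_{ABC}}$, with Proposition~\ref{prop:I5-product} doing the real work. First, $1-I_5(\psi)\ge 0$ for every pure state, as shown in the proof of Proposition~\ref{prop:I5-product}. Hence $\big(1-I_5(\psi)\big)^{2/3}$ is a well-defined, nonnegative, continuous function on the pure-state manifold. It vanishes exactly on fully product states, since $x\mapsto x^{2/3}$ is strictly increasing on $[0,\infty)$ with zero only at $x=0$. Nothing in the argument uses qubit dimensions, so it covers arbitrary finite-dimensional tripartite states; the three-qubit case in the statement is a special case.

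\emph{Easy direction.} Suppose $\rho=\sum_j p_j\,\rho_A^{(j)}\otimes\rho_B^{(j)}\otimes\rho_C^{(j)}$. I would spectrally decompose each local factor. This rewrites $\rho$ as a convex combination of fully product pure states $|a\rangle|b\rangle|c\rangle$, which is an element of $\mathbb{D}(\rho)$. For each such state the reduced state $\rho_{BC}$ is pure and product, so its partial transpose is again a product pure projector. Therefore $I_5=1$ for every term, and the infimum in Eq.~\eqref{eq:1mI5} is at most $0$. Since it is also nonnegative, $E_{1-I_5}(\rho)=0$.

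\emph{Nontrivial direction.} Suppose $E_{1-I_5}(\rho)=0$. The key step, and the expected main obstacle, is showing that the infimum is actually attained, so that one can pass from ``infimum zero'' to an exact decomposition. I would handle this with the standard compactness argument for convex roofs in finite dimension $d=d_Ad_Bd_C$.
\begin{itemize}
\item By Carathéodory's theorem, it suffices to consider decompositions with at most $d^2$ pure states, padding with zero weights if necessary.
\item Such decompositions are parametrized by isometries $V$ acting on $\sqrt{\rho}$ (the Schrödinger--HJW theorem). The set of these isometries is compact.
\item Where $p_\ell>0$ the map $V\mapsto p_\ell\,(1-I_5(\psi_\ell))^{2/3}$ is continuous. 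Where $p_\ell\to 0$, the term tends to $0$ because the function is bounded, which again gives continuity.
\end{itemize}
Hence the minimum is attained, and there exists $\{p_\ell,\psi_\ell\}\in\mathbb{D}(\rho)$ with $\sum_\ell p_\ell(1-I_5(\psi_\ell))^{2/3}=0$.

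Each term in this sum is nonnegative, so $I_5(\psi_\ell)=1$ whenever $p_\ell>0$. By Proposition~\ref{prop:I5-product}, every such $|\psi_\ell\rangle$ is fully product. Therefore $\rho=\sum_\ell p_\ell|\psi_\ell\rangle\langle\psi_\ell|$ is fully separable.
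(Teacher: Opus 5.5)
Your proposal is correct and follows the paper's route. The paper's one-line proof combines $1-I_5\ge 0$ from Eq.~\eqref{eq:I5s1} with Proposition~\ref{prop:I5-product} and the term-by-term vanishing argument used for $E_{\phi_{ABC}}$; the only difference is that you explicitly justify attainment of the convex-roof infimum (a Carath\'eodory bound on the number of terms, where plain Carath\'eodory gives $d^2+1$ rather than $d^2$, which is immaterial; the HJW parametrization by a compact set of isometries; and continuity), whereas the paper merely asserts this step from continuity of $1-I_5$ on the pure-state manifold.
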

\begin{proof}
This follows directly from Eq.~\eqref{eq:I5s1}, which gives $I_5\leq 1$ and hence $1-I_5\geq 0$.
\end{proof}
\section{Entanglement Measures for Four-Qubit Pure States}
\label{sec:18}
\noindent
Four-qubit pure states provide the simplest but nontrivial setting in which the criteria in Sec.~\ref{sec:separability} become directly useful: tracing out one qubit produces a tripartite mixed state. 
At the same time, the pure-state system itself already exhibits a rich multipartite structure.
A four-qubit pure state modulo local unitary transformations is characterized by eighteen independent real parameters~\cite{Carteret:2000jop}, thus a complete description should involve eighteen independent quantities.
We identify such a set explicitly in this section.
As summarized in Table~\ref{tab:hierarchy-measures}, these eighteen measures consist of six bipartite measures, eight tripartite measures, and four quadripartite measures.
We now introduce them and explain how together they form a complete set of entanglement measures for four-qubit pure states.
In particular, we introduce how the hyperdeterminant generates these entanglement measures. 
Finally, we evaluate these entanglement measures for representative states and list the nine entangled ways for 4-qubit pure states.
\subsection{Bipartite Entanglement Measures}
\label{subsec:bi}
\noindent
The two-tangle~\eqref{eq:2-tangle-intro} is an entanglement measure originally defined for two-qubit pure states~\cite{Bennett:1996gf,Hill:1997pfa,Wootters:1997id}
\begin{equation}
|\Psi\rangle = \sum_{j,k = 0}^1 t_{jk}|jk\rangle\ .
\end{equation}
The definition~\eqref{eq:2-tangle-intro} can be generalized to finite-dimensional bipartite pure states.
If we limit to two-qubit pure states, the two-tangle can also be expressed in alternative forms, such as the square of the two-qubit concurrence~\cite{Bennett:1996gf,Hill:1997pfa,Wootters:1997id}
\begin{equation}
\tau_{A|B} = C^2(\Psi) \ ,\quad \text{with}\ \ C(\Psi) \equiv |\langle\Psi|\sigma_y^{\otimes 2}|\Psi^*\rangle|\ ,
\end{equation}
the determinant of the reduced density matrix~\cite{Bennett:1996gf,Hill:1997pfa,Wootters:1997id}
\begin{equation}
\tau_{A|B} = 4\det \rho_A\ ,
\end{equation}
and the hyperdeterminant of the coefficient tensor
\begin{equation}\label{eq:2hd}
\tau_{A|B} = 4|\Det\, \mT_2|^2\ ,
\end{equation}
where $\mT_2$ is the $2\times2$ coefficient tensor of the two-qubit state, with components
\begin{equation}
(\mT_2)_{jk} = t_{jk}\ .
\end{equation}
\\

\noindent
For a two-qubit mixed state, the convex-roof in Eq.~\eqref{eq:2-tangle-cr} can be computed as in Refs.~\cite{Hill:1997pfa, Wootters:1997id}: Let $\rho_{AB}$ be the density matrix of these two qubits, define the \textit{spin flipped} density matrix as~\cite{Hill:1997pfa, Wootters:1997id}
\begin{equation}\label{eq:2-flip}
\tilde{\rho}_{AB} \equiv (\sigma_y\otimes\sigma_y)\rho_{AB}^*(\sigma_y\otimes\sigma_y)\ .
\end{equation}
The eigenvalues of $\rho_{AB}\tilde{\rho}_{AB}$ are real and nonnegative.
Let $\lambda_1, \lambda_2, \lambda_3, \lambda_4$ be the square roots of these eigenvalues in decreasing order. 
Then the two-tangle~\eqref{eq:2-tangle-cr} can be obtained using the following formula~\cite{Hill:1997pfa, Wootters:1997id}
\begin{equation}\label{eq:2-tangle-cr-result}
E_{\tau_{A|B}}(\rho) = (\max\{0, \lambda_1 - \lambda_2 - \lambda_3 - \lambda_4\})^2\ .
\end{equation}
\subsection{Tripartite Entanglement Measures}
\label{subsec:tri}
\noindent
The three-tangle~\eqref{eq:CKW} is a genuine tripartite entanglement measure for three-qubit pure states~\cite{Coffman:1999jd}.
It can also be defined as the hyperdeterminant of the coefficient tensor.
For a general three-qubit pure state
\begin{equation}
|\Psi\rangle = \sum_{j,k, l=0}^1 t_{jkl}|jkl\rangle \ ,
\end{equation}
the three-tangle is given by~\cite{Coffman:1999jd}
\begin{equation}\label{eq:3-tangle}
\tau_{ABC} = 4|\Det\, \mT_3| = 4 \left| d_1 - 2 d_2 + 4 d_3 \right| \ ,
\end{equation}
with $\mT_3$ the $2\times2\times2$ coefficient tensor of the three-qubit pure state, $(\mT_3)_{jkl} = t_{jkl}$, and
\begin{equation}
\begin{split}
d_1 =&\, t_{000}^2 t_{111}^2 + t_{001}^2 t_{110}^2 + t_{010}^2 t_{101}^2 + t_{100}^2 t_{011}^2\ ,\\
d_2 =&\, t_{000} t_{111} t_{011} t_{100} + t_{000} t_{111} t_{101} t_{010} + t_{000} t_{111} t_{110} t_{001}\\
&+ t_{011} t_{100} t_{101} t_{010} + t_{011} t_{100} t_{110} t_{001} + t_{101} t_{010} t_{110} t_{001}\ ,\\
d_3 =&\, t_{000} t_{110} t_{101} t_{011} + t_{111} t_{001} t_{010} t_{100}\ .
\end{split}
\end{equation}
\\

\noindent
Another tripartite quantity is the third-order negativity $I_5$~\cite{Sudbery:2001jxc}.
In Ref.~\cite{Sudbery:2001jxc}, the authors studied the set of local invariants of three-qubit pure states, including two-tangles, the three-tangle~\eqref{eq:3-tangle}, and the third-order negativity $I_5$~\eqref{eq:I5-replica}.
Although $I_5$ is not an entanglement measure, combining two- and three-tangles can form an entanglement measure $\phi_{ABC}$~\eqref{eq:phiABC}~\cite{Oreshkov:2006gqe}.
In Sec.~\ref{sec:phi-prop}, we have shown that $\phi_{ABC}$ is closely related to $I_5$, maintaining the permutational symmetry for the three subsystems in a manifest form~\eqref{eq:phiABC-new}.
We also show that $\phi_{ABC}$ can serve as a necessary and sufficient criterion for the full separability of three-qubit mixed states.
Since $\phi_{ABC}$ is an entanglement measure, its role is not limited to providing a yes-or-no criterion for tripartite full separability. The value of $\phi_{ABC}$ itself is also meaningful, so it is natural to study its allowed range and the states that saturate the corresponding bounds, which have been discussed in Sec.~\ref{sec:phi-prop} and Appendix~\ref{appx:phi-upper}.
\subsection{Quadripartite Entanglement Measures}
\noindent
Given the number of parameters, we still need 4 quadripartite entanglement measures.
We start with the four-qubit hyperdeterminant $\Delta$ and show the remaining measures hiding in it.
Consider a general four-qubit pure state
\begin{equation}\label{eq:4qubit-state}
\ket{\Psi} = \sum_{j,k,l,m = 0}^1 t_{jklm} \ket{jklm}\
\end{equation}
with the following notation
\begin{equation}\label{eq:binary-index}
t_{jklm} := a_r\ ,\quad r := 8j + 4k + 2l + m\ .
\end{equation}
We show that the degree-24 hyperdeterminant $\Delta$~\cite{Luque:2003dun} can be expressed in terms of objects with four-body permutation symmetry
\begin{equation}\label{eq:4-hyperdet}
\begin{split}
\Delta =&\, \Det\, \mT_4\\
=&\, -\frac{1}{108}\Big(4H^6\Pi + 6H^5\Sigma W - 3H^4\Sigma^2 - 48H^3\Pi W - 4H^3 W^3 + 48H^2\Pi\Sigma \\
&\qquad\quad\qquad - 60H^2\Sigma W^2 + 96H\Sigma^2 W + 64 \Pi^2 + 96\Pi W^2 - 32\Sigma^3 + 36W^4\Big)\ ,
\end{split}
\end{equation}
where $\Sigma$ and $\Pi$ can be expressed using degree-four invariants $L$, $M$, and $N$:
\begin{equation}\label{eq:Sigma-Pi}
\Sigma = L^2 + M^2 + N^2\ ,\quad \Pi = (L - M)(M - N)(N - L)\ .
\end{equation}
The degree-four invariants $L$, $M$, and $N$ are defined as
\begin{equation}\label{eq:Ldef-original}
L=
\det\!
\begin{pmatrix}
a_0 & a_4 & a_8 & a_{12}\\
a_1 & a_5 & a_9 & a_{13}\\
a_2 & a_6 & a_{10} & a_{14}\\
a_3 & a_7 & a_{11} & a_{15}
\end{pmatrix},\
M=
\det\!
\begin{pmatrix}
a_0 & a_8 & a_2 & a_{10}\\
a_1 & a_9 & a_3 & a_{11}\\
a_4 & a_{12} & a_{6} & a_{14}\\
a_5 & a_{13} & a_{7} & a_{15}
\end{pmatrix},\
N=
\det\!
\begin{pmatrix}
a_0 & a_1 & a_8 & a_{9}\\
a_2 & a_3 & a_{10} & a_{11}\\
a_4 & a_5 & a_{12} & a_{13}\\
a_6 & a_7 & a_{14} & a_{15}
\end{pmatrix},
\end{equation}
and they satisfy $L + M + N = 0$, thus only two of them are independent. The quantity $\Sigma$ and $\Pi$~\eqref{eq:Sigma-Pi} constructed from $L$, $M$, and $N$ are symmetric under permutation among $A$, $B$, $C$, and $D$.
Therefore, $\Sigma$ and $\Pi$ are quadripartite measures.
It has been shown in Ref.~\cite{Eltschka:2012voz} that
\begin{equation}\label{eq:LMN-det-rho}
|L|^2 = \det\rho_{AB}\ ,\quad |M|^2 = \det\rho_{AC}\ , \quad |N|^2 = \det\rho_{AD}\ .
\end{equation}
For a normalized $4\times 4$ matrix $\rho$, we have the following identity,
\begin{equation}
\det \rho = \frac{1}{24}\Big[1 - 6\Tr (\rho^2) + 3 \big(\!\Tr (\rho^2)\big)^2 + 8 \Tr (\rho^3) - 6 \Tr (\rho^4) \Big]\ ,
\end{equation}
thus we have
\begin{equation}
|L|^2 = \frac{1}{24}\Big[1 - 6\Tr(\rho_{AB}^2) + 3 \big(\!\Tr(\rho_{AB}^2)\big)^2 + 8 \Tr (\rho_{AB}^3) - 6 \Tr (\rho_{AB}^4) \Big]\ ,
\end{equation}
where $\Tr (\rho_{AB}^4)$ contains the information about the fourth-order spectral. Therefore, $L$, $M$, and $N$ are independent of the bipartite and tripartite measures that are studied in Secs.~\ref{subsec:bi} and~\ref{subsec:tri}.
\\

\noindent
The quantity $W$ in Eq.~\eqref{eq:4-hyperdet} is a degree-6 quantity defined as~\cite{Tapia:2002du,Luque:2003dun,Miyake:2002vut}
\begin{equation}
W = D_{xy} + D_{xz} + D_{xw}\ ,
\end{equation}
with $D_{uv} = \det(B_{uv})$. The $3\times 3$ matrix $B_{uv}$ can be obtained by
\begin{equation}
\det\Big(\frac{\partial^2 A}{\partial u_i\partial v_j}\Big) = (u_0^2\quad u_0u_1\quad u_1^2) B_{uv}
\begin{pmatrix}
v_0^2\\
v_0v_1\\
v_1^2
\end{pmatrix}\ ,
\end{equation}
with $A$ a quadrilinear form
\begin{equation}
A(\mathbf{x}, \mathbf{y}, \mathbf{z}, \mathbf{w}) = \sum_{i,j,k,l=0}^1 t_{ijkl}x_i y_j z_k w_l\ .
\end{equation}
There are some relations between these quantities:
\begin{align}
&D_{xy} = D_{zw}\ , \quad D_{xz} = D_{yw}\ ,\quad D_{xw} = D_{yz}\ ,\\
&HL = D_{xz} - D_{xw}\ ,\quad HM = D_{xw} - D_{xy}\ ,\quad HN = D_{xy} - D_{xz}\ .
\end{align}
\\

\noindent
In Eq.~\eqref{eq:4-hyperdet}, $H$ is a degree-two invariant defined as~\cite{Luque:2003dun}
\bea
\label{eq:H-original}
H = a_0 a_{15} - a_1 a_{14} - a_2 a_{13} + a_3 a_{12} - a_4 a_{11} + a_5 a_{10} + a_6 a_9 - a_7 a_8\ .
\eea
We will show that the $H$ in Eq.~\eqref{eq:H-original} is precisely the quantity entering the four-tangle~\eqref{eq:4-tangle}, by expressing it in terms of the four-fold spin-flip operator $\Omega$~\cite{Uhlmann:2000ckp,Wong:2000cmz}:
\bea
\label{eq:H-spinflip-claim}
H=\frac{1}{2}\langle\Psi^*|\Omega|\Psi\rangle\ , \quad \Omega := \sigma_y \otimes \sigma_y \otimes \sigma_y \otimes \sigma_y\ .
\eea
For the mixed state $\rho_{ABCD}$, similar to Eq.~\eqref{eq:2-flip}, define
\bea
\tilde{\rho}_{ABCD}:=\sigma_y^{\otimes 4}\rho^*_{ABCD}\sigma_y^{\otimes 4}\ ,
\eea
then like Eq.~\eqref{eq:2-tangle-cr-result}, the four-tangle $4|H|^2$ can be computed from the square roots of $\rho_{ABCD}\tilde{\rho}_{ABCD}$ eigenvalues~\cite{Uhlmann:2000ckp,Wong:2000cmz}
\bea
4|H|^2 = C^2_{1234}(\rho_{ABCD}) = (\max\{0,\lambda_1 - \lambda_2 - \cdots - \lambda_{16}\})^2\ .
\eea
To verify Eq.~\eqref{eq:H-spinflip-claim}, recall that
\begin{equation}
\label{eq:sigmay-action}
\sigma_y\ket{0} = i\ket{1}\ ,\quad
\sigma_y\ket{1} = -i\ket{0}\ .
\end{equation}
Hence, one has
\begin{equation}
\label{eq:Omega-on-basis}
\Omega |jklm\rangle
=
i^4(-1)^{j+k+l+m}\,|\bar j\,\bar k\,\bar l\,\bar m\rangle
=
(-1)^{j+k+l+m}\,|\bar j\,\bar k\,\bar l\,\bar m\rangle\ ,
\end{equation}
where $\bar j:=1-j$.
Substituting Eq.~\eqref{eq:4qubit-state} into Eq.~\eqref{eq:H-spinflip-claim}, we obtain
\begin{align}
\langle\Psi^*|\Omega|\Psi\rangle
&=
\sum_{j,k,l,m,j',k',l',m'}
t_{jklm} t_{j'k'l'm'}
\langle jklm|\Omega|j'k'l'm'\rangle \nonumber\\
&=
\sum_{j,k,l,m=0}^1
(-1)^{j+k+l+m}
t_{jklm}t_{\bar j\,\bar k\,\bar l\,\bar m} \ .
\label{eq:psiOmegaPsi-sum}
\end{align}
Using Eq.~\eqref{eq:binary-index}, Eq.~\eqref{eq:psiOmegaPsi-sum} becomes
\begin{align}
\langle\Psi^*|\Omega|\Psi\rangle
=&
+a_0a_{15}-a_1a_{14}-a_2a_{13}+a_3a_{12}
-a_4a_{11}+a_5a_{10}+a_6a_9-a_7a_8 \nonumber\\
&
+
a_{15}a_0-a_{14}a_1-a_{13}a_2+a_{12}a_3
-a_{11}a_4+a_{10}a_5+a_9a_6-a_8a_7 \nonumber\\
=&\,
2\big(
a_0a_{15}-a_1a_{14}-a_2a_{13}+a_3a_{12}
-a_4a_{11}+a_5a_{10}+a_6a_9-a_7a_8
\big) \ .
\label{eq:psiOmegaPsi-expand}
\end{align}
Combining Eq.~\eqref{eq:psiOmegaPsi-expand} with Eq.~\eqref{eq:H-spinflip-claim}, we arrive at Eq.~\eqref{eq:H-original}.
\\

\noindent
Since the four-qubit hyperdeterminant $|\Delta| = |\Det \mT_4|$ has homogeneous degree $24$, the theorem~1 in Ref.~\cite{Eltschka:2012voz} rules it out as an entanglement measure.
By contrast, the degree-$4$ quantity $|\Delta|^{1/6}$ is one candidate that satisfies the general monotonicity criterion.
Similarly, since $\Sigma$ and $\Pi$ have degrees $8$ and $12$, respectively, the quantities $|\Sigma|^{1/2}$ and $|\Pi|^{1/3}$ are degree-4 entanglement measures.
Finally, the four-tangle $4|H|^2$ has degree $4$, thus it is also an entanglement measure.\footnote{One can also choose $4|H|^2$, $|\Sigma|^{1/2}$, $|\Pi|^{1/3}$, and $|W|^{2/3}$ as the four independent quadripartite entanglement measures~\cite{Tapia:2002du,Luque:2003dun,Miyake:2002vut}, and here we choose the hyperdeterminant $\Delta$ instead of $W$. }
\subsection{Examples}
\noindent 
We compute all 18 entanglement measures for the following representative states:
\begin{itemize}
\item{
Generalized GHZ-State
\begin{equation}
\ket{\widetilde{\mathrm{GHZ}}^{(4)}} = a \ket{0000} + b \ket{1111}\ ;
\label{eq-gghz4Q}
\end{equation}}
\item{
W-State
\begin{equation}
\ket{\mathrm{W}} = \frac{1}{2} \big( \ket{0001} + \ket{0010} + \ket{0100} + \ket{1000} \big) \ ;
\label{eq-W4Q}
\end{equation}
}
\item{
Cluster-State
\begin{equation}
\ket{\mathrm{Cluster}} = \frac{1}{2} \big( \ket{0000} + \ket{0011} + \ket{1100} - \ket{1111} \big) \ ;
\label{eq-Cluster4Q}
\end{equation}
}
\item{
Dicke-State
\begin{equation}
\ket{\mathrm{Dicke}} = \frac{1}{\sqrt{6}}
\big(
\ket{0011} + \ket{0101} + \ket{0110}
+ \ket{1001} + \ket{1010} + \ket{1100}
\big) \ ;
\label{eq-Dicke42Q}
\end{equation}
}
\item{
Generalized Double Bell-State ($|a|^2 + |b|^2 = |c|^2 + |d|^2 = 1$)
\begin{equation}\label{eq:gdBell}
\ket{\widetilde{\mathrm{Bell}}}_{AB}\otimes\ket{\widetilde{\mathrm{Bell}}}_{CD} = \big( a\ket{00} + b\ket{11}\big)\otimes\big(c\ket{00} + d\ket{11} \big) \ .
\end{equation}
}
\end{itemize}
The results of the entanglement measures for these representative states are:
\begin{itemize}
\item{Generalized GHZ-state~\eqref{eq-gghz4Q},
\begin{equation}\label{eq:gghz-18}
\begin{split}
&\tau_{A|B} = \tau_{A|C} = \tau_{A|D} = \tau_{B|C} = \tau_{B|D} = \tau_{C|D} = 0\ ,\\
&\tau_{ABC} = \tau_{ABD} = \tau_{ACD} = \tau_{BCD} = 0\ ,\\
&\phi_{ABC} = \phi_{ABD} = \phi_{ABD} = \phi_{BCD} = 0\ ,\\
&H = ab\ ,\quad\Sigma = 0\ , \quad \Pi = 0\ ,\quad \Delta = 0\quad (W = 0)\ ;
\end{split}
\end{equation}
}
\item{W-state~\eqref{eq-W4Q},
\begin{equation}
\begin{split}
&\tau_{A|B} = \tau_{A|C} = \tau_{A|D} = \tau_{B|C} = \tau_{B|D} = \tau_{C|D} = 1/4\ ,\\
&\tau_{ABC} = \tau_{ABD} = \tau_{ACD} = \tau_{BCD} = 0\ ,\\
&\phi_{ABC} = \phi_{ABD} = \phi_{ABD} = \phi_{BCD} = 207/8\ ,\\
&H = 0\ ,\quad \Sigma = 0\ , \quad \Pi = 0\ ,\quad \Delta = 0\quad (W = 0)\ ;
\end{split}
\end{equation}
}
\item{Cluster-State~\eqref{eq-Cluster4Q},
\begin{equation}
\begin{split}
&\tau_{A|B} = \tau_{A|C} = \tau_{A|D} = \tau_{B|C} = \tau_{B|D} = \tau_{C|D} = 0\ ,\\
&\tau_{ABC} = \tau_{ABD} = \tau_{ACD} = \tau_{BCD} = 0\ ,\\
&\phi_{ABC} = \phi_{ABD} = \phi_{ABD} = \phi_{BCD} = 36\ ,\\
&H = 0\ , \quad \Sigma = \frac{1}{2^7}\ , \quad \Pi = \frac{1}{2^{11}}\ ,\quad \Delta = 0\quad (W = 0)\ ;
\end{split}
\end{equation}
}
\item{Dicke-State~\eqref{eq-Dicke42Q}:
\begin{equation}
\begin{split}
&\tau_{A|B} = \tau_{A|C} = \tau_{A|D} = \tau_{B|C} = \tau_{B|D} = \tau_{C|D} = 1/9\ ,\\
&\tau_{ABC} = \tau_{ABD} = \tau_{ACD} = \tau_{BCD} = 0\ ,\\
&\phi_{ABC} = \phi_{ABD} = \phi_{ABD} = \phi_{BCD} = 173/6\ ,\\
&H = 1/2\ ,\quad \Sigma = 0\ , \quad \Pi = 0\ ,\quad \Delta = 0\quad (W = 1/72)\ ;
\end{split}
\end{equation}
}
\item{Generalized Double Bell-State~\eqref{eq:gdBell}:
\begin{equation}
\begin{split}
&\tau_{A|B} = 4|ab|^2\ ,\quad \tau_{C|D} = 4|cd|^2\\
&\tau_{A|C} = \tau_{A|D} = \tau_{B|C} = \tau_{B|D} = 0\ ,\\
&\tau_{ABC} = \tau_{ABD} = \tau_{ACD} = \tau_{BCD} = 0\ ,\\
&\phi_{ABC} = \phi_{ABD} = 144 |ab|^2\ ,\quad \phi_{ACD} = \phi_{BCD} = 144|cd|^2 \ ,\\
&H = 2abcd\ ,\quad \Sigma = 2a^4 b^4 c^4 d^4\ , \quad \Pi = -2a^6 b^6 c^6 d^6\ ,\quad \Delta = 0\quad (W = 2a^3 b^3 c^3 d^3)\ .
\end{split}
\end{equation}
}
\end{itemize}
From the generalized GHZ-state, we find that only the four-tangle survives and can quantify quantum entanglement.
This observation will be discussed further in Sec.~\ref{sec:H}.
The hyperdeterminant $\Delta$ always vanishes for these states.
Therefore, we are interested in exploring further the all entangled classes of four-qubit pure states\footnote{
Note that the states listed here have not been normalized. }\cite{Verstraete:2002gqj}:
\begin{equation}\label{eq-Gabcd1}
\begin{split}
\ket{G_{abcd}^1} = &\,\frac{a+d}{2}\big(\ket{0000}+\ket{1111}\big) + \frac{a-d}{2}\big(\ket{0011}+\ket{1100}\big)\\
&+ \frac{b+c}{2}\big(\ket{0101}+\ket{1010}\big) + \frac{b-c}{2}\big(\ket{0110}+\ket{1001}\big)\ ;
\end{split}
\end{equation}
\begin{equation}\label{eq-Gabc2}
\begin{split}
\ket{G_{abc}^2} =&\,\frac{a+b}{2}\big(\ket{0000} + \ket{1111}\big) + \frac{a-b}{2}\big(\ket{0011} + \ket{1100}\big)\\
&+ c\big(\ket{0101} + \ket{1010}\big) + \ket{0110}\ ;
\end{split}
\end{equation}
\begin{equation}\label{eq-Gab3}
\ket{G_{ab}^3} = a\big(\ket{0000} + \ket{1111}\big) + b\big(\ket{0101} + \ket{1010}\big) + \ket{0110} + \ket{0011}\ ;
\end{equation}
\begin{equation}\label{eq-Gab4}
\begin{split}
\ket{G_{ab}^4} =&\, a\big(\ket{0000} + \ket{1111}\big) + \frac{a + b}{2}\big(\ket{0101} + \ket{1010}\big) + \frac{a - b}{2}\big(\ket{0110} + \ket{1001}\big)\\
&+\frac{i}{\sqrt{2}}\big(\ket{0001} + \ket{0010} + \ket{0111} + \ket{1011}\big)\ ;
\end{split}
\end{equation}
\begin{equation}\label{eq-Ga5}
\ket{G_a^5} = a\big(\ket{0000} + \ket{0101} + \ket{1010} + \ket{1111}\big) + i\ket{0001} + \ket{0110} - i\ket{1011}\ ;
\end{equation}
\begin{equation}\label{eq-Ga6}
\ket{G_a^6} = a\big(\ket{0000} + \ket{1111}\big) + \ket{0011} + \ket{0101} + \ket{0110}\ ;
\end{equation}
\begin{equation}\label{eq-G7}
\ket{G^7} = \ket{0000} + \ket{0101} + \ket{1000} + \ket{1110}\ ;
\end{equation}
\begin{equation}\label{eq-G8}
\ket{G^8} = \ket{0000} + \ket{1011} + \ket{1101} + \ket{1110}\ ;
\end{equation}
\begin{equation}\label{eq-G9}
\ket{G^9} = \ket{0000} + \ket{0111}\ .
\end{equation}

\noindent
The results of the quadripartite entanglement measures are listed in Table~\ref{tab-4-EM}, and $\Delta$ is only non-zero for $G_{abcd}^1$~\cite{Luque:2003dun}.
Hence, $\Delta$ characterizes a very specific type of quadripartite entanglement.
\begin{table}[htpb]
\noindent
\hspace*{\dimexpr(\textwidth-\paperwidth)/2\relax}%
\makebox[\paperwidth][c]{
\begin{tabular}{lccccc}
\hline
State & $H$ & $L$ & $M$ & $D_{xw}$ & $\Delta$\\
\hline
$G_{abcd}^1$~\eqref{eq-Gabcd1} & $\frac{a^2 + b^2 +c^2 + d^2}{2}$ & $abcd$ & $[(\frac{c - d}{2})^2 \!-\! (\frac{a - b}{2})^2][(\frac{a + b}{2})^2 \!-\! (\frac{c + d}{2})^2]$ & $\frac{(ad-bc)(bd-ac)(ab-cd)}{4}$ & $\frac{V(a^2,b^2,c^2,d^2)}{256}$\\
$G_{abc}^2$~\eqref{eq-Gabc2} & $\frac{a^2 + b^2 + 2c^2}{2}$ & $abc^2$ & $[c^2-(\frac{a+b}{2})^2](\frac{a - b}{2})^2$ & $\frac{c^2(a-b)^2(c^2 - ab)}{4}$ & $0$\\
$G_{ab}^3$~\eqref{eq-Gab3} & $a^2 + b^2$ & $a^2b^2$ & $0$ & $0$ & $0$\\
$G_{ab}^4$~\eqref{eq-Gab4} & $\frac{3a^2 + b^2}{2}$ & $a^3b$ & $[a^2-(\frac{a + b}{2})^2](\frac{a - b}{2})^2$ & $\frac{a^3(a - b)^3}{4}$ & $0$\\
$G_a^5$~\eqref{eq-Ga5} & $2a^2$ & $a^4$ & $0$ & $0$ & $0$\\
$G_a^6$~\eqref{eq-Ga6} & $a^2$ & $0$ & $0$ & $0$ & $0$\\
$G^7$~\eqref{eq-G7} & $0$ & $0$ & $0$ & $0$ & $0$\\
$G^8$~\eqref{eq-G8} & $0$ & $0$ & $0$ & $0$ & $0$\\
$G^9$~\eqref{eq-G9} & $0$ & $0$ & $0$ & $0$ & $0$\\
\hline
\end{tabular}}
\caption{Quadripartite entanglement measures for all different entangled classes~\cite{Luque:2003dun}.
The Vandermonde determinant is $V(a^2,b^2,c^2,d^2) = (a^2 - b^2)^2 (a^2 - c^2)^2 (a^2 - d^2)^2 (b^2 - c^2)^2 (b^2 - d^2)^2 (c^2 - d^2)^2$. }
\label{tab-4-EM}
\end{table}
 \section{Independent Entanglement Measure}
\label{sec:H}
\noindent
The next question is whether one of these measures can exist independently, namely, whether we can keep a single entanglement measure nonvanishing while turning off all the others.
Among the eighteen measures, the four quantities $\phi_{ABC}$, $\phi_{ABD}$, $\phi_{ACD}$, and $\phi_{BCD}$ play a distinguished role.
As shown in Sec.~\ref{sec:phi-prop}, each of them provides a necessary and sufficient criterion for the full separability of the corresponding three-qubit mixed state.
Therefore, if one wishes to turn off all four $\phi$'s, one is forced to require all four three-qubit reduced density matrices to be fully separable.
The Proposition~1 of Ref.~\cite{Thapliyal:1998nw} shows that: \textit{For a $\mathtt{q}$-qubit pure state, suppose that, for every qubit $X$, the $(\mathtt{q} - 1)$-qubit reduced density matrix $\rho_{\bar{X}} = \Tr_X |\Psi\rangle\langle\Psi|$ is fully separable, then $|\Psi\rangle$ is LU-equivalent to an $\mathtt{q}$-qubit generalized GHZ-state
\begin{equation}\label{eq:n-gghz}
\ket{\widetilde{\mathrm{GHZ}}^{(\mathtt{q})}} = a|0\rangle^{\otimes \mathtt{q}} + b|1\rangle^{\otimes \mathtt{q}}\ ,
\end{equation}
and conversely, any state of the form~\eqref{eq:n-gghz} satisfies the above property.}\footnote{We will extend this proposition to general $\mathtt{q}$-qudit pure states in Sec.~\ref{sec:qqudit-ghz}. }
\\

\noindent
For $\mathtt{q} = 4$, this implies that any four-qubit pure state with
\begin{equation}
\phi_{ABC} = \phi_{ABD} = \phi_{ACD} = \phi_{BCD} = 0
\end{equation}
is LU equivalent to a generalized GHZ-type four-qubit state
\begin{equation}
\ket{\widetilde{\mathrm{GHZ}}^{(4)}} = a \ket{0000} + b \ket{1111}\ ,
\end{equation}
whose 18 entanglement measures are given in Eq.~\eqref{eq:gghz-18}. One can see that, except for $H = ab$, all the other 17 measures vanish. 
This observation already shows that among the eighteen entanglement measures, four-tangle can survive independently, and this occurs uniquely for the generalized GHZ-state.
We now show that this is in fact the \textit{only} possibility, by showing that $\phi_{ABC}$ cannot exist independently:
\begin{proposition}\label{prop:5conditions}
There is no four-qubit pure state $|\Psi\rangle \in (\mathbb{C}^2)^{\otimes 4}$ such that
\begin{enumerate}[label=(\roman*)]
\item $\rho_{BCD} = \Tr_A |\Psi\rangle\langle\Psi|$ is fully separable,
\item $\rho_{ACD} = \Tr_B |\Psi\rangle\langle\Psi|$ is fully separable,
\item $\rho_{ABD} = \Tr_C |\Psi\rangle\langle\Psi|$ is fully separable,
\item $\rho_{ABC} = \Tr_D |\Psi\rangle\langle\Psi|$ is not fully separable,
\item $\tau_{ABC} = \tau_{ABD} = \tau_{ACD} = \tau_{BCD} = 0$.
\end{enumerate}
\end{proposition}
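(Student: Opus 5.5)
The plan is to reduce a hypothetical counterexample to a two-term normal form, show that conditions (i)--(iv) leave essentially one family of states, and then use condition (v) to exclude that family. Suppose $|\Psi\rangle$ satisfies (i)--(v).

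\textbf{Step 1: two-term product form.} By (i), $\rho_{BCD}$ is fully separable. Since subsystem $A$ is a qubit, $\rho_{BCD}$ has rank at most $2$. Its range is therefore spanned by at most two fully product vectors $|\beta_1\gamma_1\delta_1\rangle$ and $|\beta_2\gamma_2\delta_2\rangle$. I take this from the standard fact that the range of a separable state is spanned by the product vectors appearing in any separable decomposition. Consequently every purification of $\rho_{BCD}$ can be written as
\begin{equation*}
|\Psi\rangle=|x\rangle_A|\beta_1\rangle_B|\gamma_1\rangle_C|\delta_1\rangle_D+|y\rangle_A|\beta_2\rangle_B|\gamma_2\rangle_C|\delta_2\rangle_D ,
\end{equation*}
with unnormalized $|x\rangle,|y\rangle\in\mathbb{C}^2$. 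The rank-one case, in which $\rho_{BCD}$ is pure and $|\Psi\rangle$ is product across $A|BCD$, is handled separately and is easy.

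\textbf{Step 2: degenerate cases.} Suppose two of the local vectors at a single site are parallel.
\begin{itemize}
\item If $\beta_1\parallel\beta_2$, then $B$ factors out. Hence $\rho_{ABC}=|\beta\rangle\langle\beta|\otimes\rho_{AC}$. Here $\rho_{AC}$ is separable, since it is a reduction of the fully separable $\rho_{ACD}$ from (ii). So $\rho_{ABC}$ is fully separable, contradicting (iv).
\item The cases $\gamma_1\parallel\gamma_2$ (using (iii)) and $x\parallel y$ are identical.
\item If $\delta_1\parallel\delta_2$, then $\rho_{ABC}$ is the pure state $|x\beta_1\gamma_1\rangle+|y\beta_2\gamma_2\rangle$, with all three pairs of local vectors linearly independent. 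This is SLOCC-equivalent to GHZ, so its three-tangle $\tau_{ABC}$ is nonzero, contradicting (v).
\end{itemize}

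\textbf{Step 3: generic case.} Now all four pairs of local vectors are linearly independent, so $P_1=|x\beta_1\gamma_1\rangle$ and $P_2=|y\beta_2\gamma_2\rangle$ are independent. The key lemma is that the only product vectors in $\mathrm{span}\{P_1,P_2\}$ are multiples of $P_1$ and $P_2$. A superposition $aP_1+bP_2$ with $ab\neq0$ has nonzero three-tangle, hence is not product. Therefore
\begin{equation*}
\rho_{ABC}=|P_1\rangle\langle P_1|+|P_2\rangle\langle P_2|+\langle\delta_2|\delta_1\rangle|P_1\rangle\langle P_2|+\text{h.c.}
\end{equation*}
is separable if and only if it is diagonal in $\{P_1,P_2\}$, i.e. if and only if $\langle\delta_1|\delta_2\rangle=0$. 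The same lemma applied to (i)--(iii) forces
\begin{equation*}
\langle x|y\rangle=\langle\beta_1|\beta_2\rangle=\langle\gamma_1|\gamma_2\rangle=0 ,
\end{equation*}
while (iv) forces $\langle\delta_1|\delta_2\rangle\neq0$. After local unitaries,
\begin{equation*}
|\Psi\rangle=a|000\rangle|\delta_1\rangle+b|111\rangle|\delta_2\rangle ,
\end{equation*}
with $ab\neq0$ and $c\equiv ab^*\langle\delta_2|\delta_1\rangle\neq0$. This family really does satisfy (i)--(iv), so (v) is indispensable.

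\textbf{Step 4: main obstacle.} The remaining task is to show that the convex-roof three-tangle of
\begin{equation*}
\rho_{ABC}=|a|^2|000\rangle\langle000|+|b|^2|111\rangle\langle111|+\big(c|000\rangle\langle111|+\text{h.c.}\big)
\end{equation*}
is strictly positive. This is the step I expect to be hardest, since it requires evaluating or bounding a convex roof.
\begin{itemize}
\item Every pure state in every decomposition lies in $\mathrm{span}\{|000\rangle,|111\rangle\}$, and for $\psi=\alpha|000\rangle+\beta|111\rangle$ one has $\tau_{ABC}(\psi)=4|\alpha\beta|^2=|2\alpha\beta|^2$.
\item $2\alpha\beta$ is a quadratic form in the amplitudes. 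So the problem is formally a two-qubit-like concurrence problem, and Wootters/Uhlmann conjugation methods apply to it.
\item My first attempt is a direct bound. In any decomposition $\{p_\ell,\psi_\ell\}$, with $\sqrt{p_\ell}\,\psi_\ell=\alpha_\ell|000\rangle+\beta_\ell|111\rangle$, we have $\sum_\ell\alpha_\ell\beta_\ell^*=c$. By Cauchy--Schwarz and convexity,
\begin{equation*}
\sum_\ell p_\ell\,\tau(\psi_\ell)=4\sum_\ell \frac{|\alpha_\ell\beta_\ell|^2}{p_\ell}\ \ge\ \frac{4\big(\sum_\ell|\alpha_\ell\beta_\ell|\big)^2}{\sum_\ell p_\ell}\ \ge\ 4|c|^2>0 .
\end{equation*}
\end{itemize}
This is consistent with the known result $(2p-1)^2$ for mixtures of $|\mathrm{GHZ}^\pm\rangle$. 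It contradicts (v) and completes the proof.
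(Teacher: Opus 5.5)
Your proposal is correct and follows the same route as the paper. Both arguments take a two-term product form for $|\Psi\rangle$. Both use a two-term separability lemma to force $\langle x|y\rangle=\langle\beta_1|\beta_2\rangle=\langle\gamma_1|\gamma_2\rangle=0$; the paper proves this lemma via the $2\times 2$ PPT criterion, while you use a range/product-vector argument. Both then reduce to $|000\rangle|\delta_1\rangle+|111\rangle|\delta_2\rangle$ with $\langle\delta_1|\delta_2\rangle\neq 0$ by (iv), which contradicts (v).

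Your version is more complete than the paper's in two places. You handle the degenerate cases with parallel local vectors, which the paper's Lemma tacitly excludes. Your Cauchy--Schwarz bound shows the convex-roof three-tangle of $\rho_{ABC}$ is at least $4|c|^2>0$, a fact the paper only asserts.
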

\begin{proof}
Assume, for contradiction, that such a state $|\Psi\rangle$ exists.
By assumption (i), $\rho_{BCD}$ is a fully separable three-qubit state with rank at most  2.
Since $\rho_{ABC}$ is not fully separable, we further require that $\rank(\rho_{BCD}) = 2$, otherwise $\ket{\Psi}$ is a fully product state.
Therefore, one can write
\begin{equation}\label{eq:Psi-two-term}
|\Psi\rangle
=
|a_0\rangle |b_0\rangle |c_0\rangle |d_0\rangle
+
|a_1\rangle |b_1\rangle |c_1\rangle |d_1\rangle\ .
\end{equation}

\bigskip

\noindent
We now prove a simple lemma.
\begin{lemma}\label{lemma:sep}
Let $|x_0\rangle , |x_1\rangle$ be linearly independent vectors in $\mathcal{H}_X$, and $|y_0\rangle , |y_1\rangle$ be linearly independent vectors in $\mathcal{H}_Y$. Consider a positive semidefinite operator $\sigma$
\begin{equation}\label{eq:2term-lemma-state}
\sigma
=
\mu_0|x_0 y_0\rangle\langle x_0 y_0|
+
\mu_1|x_1 y_1\rangle\langle x_1 y_1|
+
s\, |x_0 y_0\rangle\langle x_1 y_1|
+
s^*\, |x_1 y_1\rangle\langle x_0 y_0| \ ,
\end{equation}
with $\mu_0,\mu_1\ge 0$.
Then $\sigma$ is separable across $X|Y$ if and only if $s=0$.
\end{lemma}
\begin{proof}
By invertible local transformations, $\sigma$ is equivalent to
\begin{equation}
\mu_0|00\rangle\langle 00|
+
\mu_1|11\rangle\langle 11|
+
s' |00\rangle\langle 11|
+
s'^* |11\rangle\langle 00| \ ,
\end{equation}
where $s'=0$ if and only if $s=0$. Its partial transpose has eigenvalues $\{\mu_0, \mu_1, |s'|, -|s'|\}$.
Hence, it is PPT if and only if $s'=0$. By the $2\times 2$ PPT criterion~\cite{Peres:1996dw,Horodecki:1996nc}, this is equivalent to separability.
\end{proof}
\bigskip
\noindent
Tracing out subsystem $B$ in Eq.~\eqref{eq:Psi-two-term} and applying Lemma~\ref{lemma:sep} with assumption (i), one can obtain $\langle b_1|b_0\rangle=0$.
Similarly, tracing out subsystem $C$ or subsystem $A$ gives $\langle c_1|c_0\rangle=0$, $\langle a_1|a_0\rangle=0$, respectively.
Therefore, up to local unitaries,
\begin{equation}\label{eq:GHZ-type}
|\Psi\rangle
=
|000\rangle_{ABC}|d_0\rangle
+
|111\rangle_{ABC}|d_1\rangle\ .
\end{equation}

\noindent
Tracing out subsystem $D$ in Eq.~\eqref{eq:GHZ-type}, we find
\begin{equation}\label{eq:rhoABC-final}
\rho_{ABC}
=
\langle d_0 | d_0\rangle |000\rangle\langle 000|
+
\langle d_1 | d_1\rangle |111\rangle\langle 111|
+
\langle d_1 | d_0\rangle |000\rangle\langle 111|
+
\langle d_0 | d_1\rangle |111\rangle\langle 000| \ ,
\end{equation}
If $\langle d_0 | d_1\rangle=0$, then $\rho_{ABC}$ in Eq.~\eqref{eq:rhoABC-final} is fully separable, contradicting assumption (iv).
Hence, we must have $\langle d_0 | d_1\rangle\neq 0$, then $\rho_{ABC}$ is a rank-$2$ GHZ-type state supported on
\begin{equation}
\mathrm{span}\{|000\rangle,|111\rangle\}\ ,
\end{equation}
with nonzero coherence between $|000\rangle$ and $|111\rangle$. In this sense, $\phi_{ABC}$ is not zero. Such a state also has a nonvanishing three-tangle.
This contradicts assumption (v).
Therefore, assumptions (i)--(v) cannot be satisfied simultaneously.
\end{proof}

\noindent
Our result shows that the four-tangle $4|H|^2$~\cite{Uhlmann:2000ckp,Wong:2000cmz} is the only one among the eighteen entanglement measures that can exist independently.
To quantify the quantum entanglement, a state should only involve one entanglement measure.
Otherwise, the quantification remains ambiguous under the current understanding.
Hence, our results show that the generalized GHZ-state is the only state for which we can quantify quantum entanglement without ambiguity using the four-tangle.
\section{Generalization to Multipartite Systems}
\label{sec:multipartite}
\noindent
The discussion so far has focused on tripartite full separability and the special role in its application to the four-qubit pure states.
However, the underlying logic is not restricted to these low-party and few-qubit examples.
The criterion based on third-order negativity could be extended to a more general system (arbitrary finite-dimensional multipartite systems).
It is therefore natural to ask to what extent the separability criteria encountered above admit a general multipartite formulation.
In this section, we show that the answer is affirmative.
\\

\noindent
We first generalize the tripartite construction and obtain necessary and sufficient criteria for full separability in general finite-dimensional multipartite systems.
We then show that the same framework also leads to a multipartite rigidity statement: for a $\mathtt{q}$-partite pure state, if all its $(\mathtt{q}-1)$-partite reduced density matrices are fully separable, the state must be LU-equivalent to a generalized GHZ-state.
This result is particularly interesting for extending to field theory.
Because the local QFT does not have a separable state for the spatial partition~\cite{Witten:2018zxz}, we cannot have a generalized GHZ-state for it.
In other words, the qudit state's quantum entanglement should correspond to the mode-wise entanglement on the field-theory side.
\subsection{Replica Criterion for Fully Separable Qudit States}
\label{sec:replica-qudit-product}
\noindent
Proposition~\ref{prop:I5-product} shows that the condition
\begin{equation}
I_5 = 1
\end{equation}
provides a criterion for tripartite fully product states.
Since $I_5$ admits a replica construction~\eqref{eq:I5-replica}, it is natural to ask whether, for \textit{multipartite} pure states, one can construct analogous quantities from the replica that can serve as criteria for \textit{multipartite} fully product states.
\\

\noindent
We consider $\mathcal{Z}_{\Omega_{1,\cdots,\mathtt{q}}}^{(\mathtt{q})}$~\eqref{eq:general-partite-function}, called multi-invariant~\cite{Gadde:2022cqi,Gadde:2023zzj}, to generalize Proposition~\ref{prop:I5-product} to $\mathtt{q}$-partite setting: $|\mathcal{Z}_{\Omega_{1,\cdots,\mathtt{q}}}^{(\mathtt{q})}| = 1$ can serve as a criterion for $\mathtt{q}$-partite fully product states (Proposition~\ref{prop:general-replica-product}).
Let
\begin{equation}\label{eq:qudit-state1}
|\Psi\rangle
=
\sum_{i_1=0}^{d_1-1}\cdots \sum_{i_{\mathtt{q}}=0}^{d_{\mathtt{q}}-1}
t_{i_1\cdots i_{\mathtt{q}}}
|i_1\cdots i_{\mathtt{q}}\rangle
\in
\mathbb{C}^{d_1}\otimes\cdots\otimes\mathbb{C}^{d_{\mathtt{q}}}
\end{equation}
be a normalized $\mathtt{q}$-qudit pure state.
Let $N$ be the number of replicas, and let
\begin{equation}\label{eq:general-Omegas}
\Omega_r \in S_N\ \ \text{for}\ \ r=1,\dots,\mathtt{q}\ ,
\end{equation}
with
\begin{equation}\label{eq:general-Omega-condition}
\Omega_r \neq \Omega_s\ \ \text{for all}\ \ 1\le r<s\le \mathtt{q} \ .
\end{equation}
Define
\begin{equation}\label{eq:general-partite-function}
\mathcal{Z}_{\Omega_{1,\cdots,\mathtt{q}}}^{(\mathtt{q})}(\Psi)
\equiv
\bigl\langle
\Psi^{\otimes N}
\big|
\Omega_1\Omega_2\cdots\Omega_{\mathtt{q}}
\big|
\Psi^{\otimes N}
\bigr\rangle \ ,
\end{equation}
where $\Omega_r$ acts on the $r$-th subsystem across the $N$ replicas. Note that the quantity in Eq.~\eqref{eq:general-partite-function} is invariant under $\Omega_{1,\cdots,\mathtt{q}}\to \Omega'\Omega_{1,\cdots,\mathtt{q}}$, thus one can always set $\Omega_1 = \mathrm{id}$ by taking $\Omega' = (\Omega_1)^{-1}$.
\begin{proposition}\label{prop:general-replica-product}
For a normalized pure state $|\Psi\rangle \in \mathbb{C}^{d_1}\otimes\cdots\otimes\mathbb{C}^{d_{\mathtt{q}}}$,
\begin{equation}\label{eq:general-product-criterion}
\big|\mathcal{Z}_{\Omega_{1,\cdots,\mathtt{q}}}^{(\mathtt{q})}(\Psi)\big|=1
\quad\Longleftrightarrow\quad
|\Psi\rangle \ \text{is fully product} \ .
\end{equation}
\end{proposition}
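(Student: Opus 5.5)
The plan is to reduce the statement to an eigenvector condition for a unitary, and then peel off one party at a time by induction on $\mathtt{q}$. The easy direction comes first. If $|\Psi\rangle=|\psi_1\rangle\otimes\cdots\otimes|\psi_{\mathtt{q}}\rangle$, then $|\Psi^{\otimes N}\rangle=\bigotimes_r|\psi_r\rangle^{\otimes N}$, and a permutation of identical copies leaves $|\psi_r\rangle^{\otimes N}$ invariant. Hence $\mathcal{Z}^{(\mathtt{q})}_{\Omega_{1,\cdots,\mathtt{q}}}=\prod_r\langle\psi_r^{\otimes N}|\Omega_r|\psi_r^{\otimes N}\rangle=1$.

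For the converse, set $U=\Omega_1\cdots\Omega_{\mathtt{q}}$, which is unitary, and $|\Phi\rangle=|\Psi^{\otimes N}\rangle$, which is normalized. By Cauchy--Schwarz, $|\langle\Phi|U|\Phi\rangle|\le1$. Equality holds if and only if $U|\Phi\rangle=e^{i\theta}|\Phi\rangle$ for some phase. Fix a party $r$. Using the invariance $\Omega_s\to\Omega'\Omega_s$ noted above with $\Omega'=\Omega_r^{-1}$, I will assume $\Omega_r=\mathrm{id}$. By condition \eqref{eq:general-Omega-condition}, every remaining $\Omega_s$ with $s\neq r$ is then $\neq\mathrm{id}$, and these remaining permutations are still pairwise distinct. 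Write $U=1_r\otimes V$, where $V$ acts only on the other $\mathtt{q}-1$ parties.

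Next I take the Schmidt decomposition $|\Psi\rangle=\sum_k s_k|k\rangle_r|\varphi_k\rangle$ across $r|\bar r$, with the $|\varphi_k\rangle$ orthonormal. Then
\begin{equation*}
|\Phi\rangle=\sum_{\vec k}\Big(\prod_i s_{k_i}\Big)|\vec k\rangle_r\otimes|\varphi_{\vec k}\rangle\ ,\qquad |\varphi_{\vec k}\rangle\equiv|\varphi_{k_1}\rangle\otimes\cdots\otimes|\varphi_{k_N}\rangle\ .
\end{equation*}
The vectors $|\vec k\rangle_r$ are orthonormal and $V$ does not act on party $r$. So the eigen-equation splits: $V|\varphi_{\vec k}\rangle=e^{i\theta}|\varphi_{\vec k}\rangle$ for every $\vec k$ with nonzero weight. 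Two choices of $\vec k$ then finish the argument.

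First, take $\vec k=(k,\dots,k)$. This gives $|\langle\varphi_k^{\otimes N}|V|\varphi_k^{\otimes N}\rangle|=1$, which is exactly the hypothesis of the proposition for the $(\mathtt{q}-1)$-partite state $|\varphi_k\rangle$, with pairwise distinct permutations. By the induction hypothesis, each $|\varphi_k\rangle=\bigotimes_{s\neq r}|\varphi^{(s)}_k\rangle$ is fully product. The base case is one party, where the statement is trivial; for $\mathtt{q}=2$ the next step alone suffices.

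Second, suppose the Schmidt rank is at least $2$, and take two indices $k\neq l$. For each party $s\neq r$, pick a replica $i_s$ with $\Omega_s(i_s)\neq i_s$, which exists because $\Omega_s\neq\mathrm{id}$. Let $\vec k$ have $k$ in slot $i_s$ and $l$ in every other slot. Since everything is now product, $\langle\varphi_{\vec k}|V|\varphi_{\vec k}\rangle$ factorizes into overlaps of single-party vectors. For party $s$ this product contains the factors $\langle\varphi^{(s)}_{k}|\varphi^{(s)}_{l}\rangle$ and $\langle\varphi^{(s)}_{l}|\varphi^{(s)}_{k}\rangle$, coming from the moved slot. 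Every factor has modulus at most $1$, so modulus $1$ forces $|\langle\varphi^{(s)}_k|\varphi^{(s)}_l\rangle|=1$, i.e.\ $|\varphi^{(s)}_k\rangle$ and $|\varphi^{(s)}_l\rangle$ agree up to a phase. Doing this for every $s\neq r$ gives $|\langle\varphi_k|\varphi_l\rangle|=1$, which contradicts orthonormality. Hence the Schmidt rank across $r|\bar r$ is $1$, and $|\Psi\rangle=|k\rangle_r\otimes|\varphi_k\rangle$ with $|\varphi_k\rangle$ fully product, so $|\Psi\rangle$ is fully product.

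The main obstacle is the bookkeeping in the last step. I need to check carefully that the factorized overlap for party $s$ really contains a cross factor $\langle\varphi^{(s)}_k|\varphi^{(s)}_l\rangle$. This holds because the cycle of $\Omega_s$ through $i_s$ pairs slot $i_s$ (label $k$) with a slot carrying label $l$. I also need to confirm that reusing the same $e^{i\theta}$ across different $\vec k$ causes no problem. It does not, since only the modulus is used.
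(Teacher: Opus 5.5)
Your proof is correct, and it takes a genuinely different route from the paper's.

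\textbf{The paper's route.} It works directly with the coefficients of the state. It turns the eigen-equation into the polynomial identity \eqref{eq:master-identity1}. It proves $\theta=0$ by placing the same nonzero entry in every replica. Then, using a replica $x_*$ with $\Omega_r^{-1}(x_*)\neq\Omega_s^{-1}(x_*)$, it derives $BC=AD$ for the $2\times 2$ minors of two-index slices, and finally invokes the flattening characterization of rank-one tensors.

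\textbf{Your route.} You gauge-fix $\Omega_r=\mathrm{id}$, Schmidt-decompose across $r|\bar r$, and read off a separate eigen-equation for each Schmidt string $\vec k$. You then induct on the number of parties. The constant strings feed the induction hypothesis. Strings with a single $k$ sitting in a moved slot then force Schmidt rank one. Because you only use moduli, you never need $\theta=0$.

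\textbf{What each approach buys.} The paper's argument is non-inductive and produces explicit polynomial constraints on the coefficient tensor. However, its final step is weaker than the criterion it cites. The remaining $\mathtt{q}-2$ indices are held at the same values in every replica, so the argument only controls minors of two-index slices, not minors of flattenings (party $r$ versus all the rest). For $\mathtt{q}\ge 3$, vanishing slice minors do not imply rank one. For example, the GHZ tensor $t_{000}=t_{111}=1/\sqrt{2}$ has every two-index slice minor equal to zero. Your induction supplies exactly this missing cross-party information, so your argument is the more complete of the two.

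One detail you should state explicitly: choose the product factors $|\varphi^{(s)}_k\rangle$ normalized, so that each single-party overlap really has modulus at most $1$.
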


\begin{proof}
The direction ``fully product $\Rightarrow \big|\mathcal{Z}_{\Omega_{1,\cdots,\mathtt{q}}}^{(\mathtt{q})}(\Psi)\big|=1$'' is immediate. We now prove the converse. Let
\begin{equation}\label{eq:Omega-total-general}
\Omega := \Omega_1\Omega_2\cdots\Omega_{\mathtt{q}} \ .
\end{equation}
Since $\Omega$ is unitary, one immediately has
\begin{equation}\label{eq:general-upper-bound}
\big|\mathcal{Z}_{\Omega_{1,\cdots,\mathtt{q}}}^{(\mathtt{q})}(\Psi)\big|\le 1 \ .
\end{equation}
\\

\noindent
If $|\mathcal{Z}_{\Omega_{1,\cdots,\mathtt{q}}}^{(\mathtt{q})}(\Psi)|=1$, then $|\Psi\rangle^{\otimes N}$ is an eigenvector of $\Omega$ with eigenvalue $e^{i\theta}$, namely
\begin{equation}\label{eq:Omega-fixed-vector-general}
\Omega |\Psi\rangle^{\otimes N}
=
e^{i\theta}|\Psi\rangle^{\otimes N}\ ,\quad \theta\in[0,2\pi)\ .
\end{equation}
We first rewrite Eq.~\eqref{eq:Omega-fixed-vector-general} as an identity for the coefficients $t_{i_1\cdots i_{\mathtt{q}}}$. For each replica $x\in \{1,\dots,N\}$, we choose local indices
\begin{equation}\label{eq:index-assignment-general-2}
i_r(x)\in\{0,\dots,d_r-1\} \ , \quad r=1,\dots,\mathtt{q} \ .
\end{equation}
Comparing the coefficient of the corresponding computational basis vector on both sides of Eq.~\eqref{eq:Omega-fixed-vector-general}, we obtain
\begin{equation}\label{eq:master-identity1}
\prod_{x=1}^{N}
t_{\,i_1(\Omega_1(x)),\,i_2(\Omega_2(x)),\,\dots,\,i_{\mathtt{q}}(\Omega_{\mathtt{q}}(x))}
=
e^{i\theta}\prod_{x=1}^{N}
t_{\,i_1(x),\,i_2(x),\,\cdots,\,i_{\mathtt{q}}(x)} \ .
\end{equation}
We next show that $\theta$ must be zero. Since $|\Psi\rangle$ is normalized, there exists at least one choice of indices
\begin{equation}\label{eq:assignment-all}
(i_1,\dots,i_{\mathtt{q}}) = (\mu_1,\dots,\mu_{\mathtt{q}})
\end{equation}
such that $t_{\mu_1\cdots \mu_{\mathtt{q}}}\neq 0$.
Now we apply the assignment~\eqref{eq:assignment-all} for all $x=1,\dots,N$, then Eq.~\eqref{eq:master-identity1} gives
\begin{equation}
\bigl(t_{\mu_1\cdots \mu_{\mathtt{q}}}\bigr)^N
=
e^{i\theta}
\bigl(t_{\mu_1\cdots \mu_{\mathtt{q}}}\bigr)^N \ .
\end{equation}
Since $t_{\mu_1\cdots \mu_{\mathtt{q}}}\neq 0$, this implies $e^{i\theta}=1$ and $\theta = 0$.
\\

\noindent
We finally show that every $2\times 2$ submatrix of every two-index slice of $T$ vanishes, with $(T)_{i_1 i_2 \cdots i_{\mathtt{q}}} = t_{i_1 i_2 \cdots i_{\mathtt{q}}}$ the $d_1\times d_2\times \cdots\times d_{\mathtt{q}}$ coefficient tensor of the $\mathtt{q}$-qudit pure state~\eqref{eq:qudit-state1}.
Consider a slice involving the $r$-th and $s$-th indices, where $1\le r<s\le \mathtt{q}$. Since $\Omega_r\neq \Omega_s$, choose $x_*$ such that
\begin{equation}
\Omega_r^{-1}(x_*)\neq \Omega_s^{-1}(x_*) \ .
\end{equation}
Fix all indices except $(i_r,i_s)$, and choose two values
\begin{equation}
\alpha_r\neq \beta_r \in \{0,\dots,d_r-1\} \ ,
\quad
\alpha_s\neq \beta_s \in \{0,\dots,d_s-1\} \ .
\end{equation}
Let
\begin{equation}\label{eq:ABCD-symmetric}
\begin{split}
&A := t_{\mu_1\dots\mu_{r-1}\alpha_r\mu_{r+1}\cdots\mu_{s-1}\alpha_s\mu_{s+1}\cdots\mu_{\mathtt{q}}}\ , \quad
B := t_{\mu_1\dots\mu_{r-1}\alpha_r\mu_{r+1}\cdots\mu_{s-1}\beta_s\mu_{s+1}\cdots\mu_{\mathtt{q}}}\ , \\
&C := t_{\mu_1\dots\mu_{r-1}\beta_r\mu_{r+1}\cdots\mu_{s-1}\alpha_s\mu_{s+1}\cdots\mu_{\mathtt{q}}}\ , \quad
D := t_{\mu_1\dots\mu_{r-1}\beta_r\mu_{r+1}\cdots\mu_{s-1}\beta_s\mu_{s+1}\cdots\mu_{\mathtt{q}}}\ ,
\end{split}
\end{equation}
where $\mu_{\dots}$ denotes the fixed values of the remaining $\mathtt q-2$ indices.
We claim that
\begin{equation}\label{eq:minor-symmetric}
BC = AD \ .
\end{equation}
If all four entries vanish, there is nothing to prove. Otherwise, after exchanging
$\alpha_r \leftrightarrow \beta_r$ and/or $\alpha_s \leftrightarrow \beta_s$ if necessary, we can assume $D\neq 0$.
Choose a copy labeled by $x_*\in \{1,\dots,N\}$. 
Apply the following assignment
\begin{equation*}
\begin{aligned}
&\text{at } x_* : \ (i_{1,\dots,r-1},i_r,i_{r+1,\dots,s-1},i_s,i_{s+1,\dots,\mathtt{q}}) = (\mu_{1,\dots,r-1},\alpha_r,\mu_{r+1,\cdots,s-1},\alpha_s,\mu_{s+1,\cdots,\mathtt{q}})\ , \\
&\text{at } x\neq x_* : \ (i_{1,\dots,r-1},i_r,i_{r+1,\dots,s-1},i_s,i_{s+1,\dots,\mathtt{q}}) = (\mu_{1,\dots,r-1},\beta_r,\mu_{r+1,\cdots,s-1},\beta_s,\mu_{s+1,\cdots,\mathtt{q}})
\end{aligned}
\end{equation*}
to Eq.~\eqref{eq:master-identity1}, then the LHS of Eq.~\eqref{eq:master-identity1} is $B C D^{N-2}$, while the RHS is $A D^{N-1}$.\footnote{Recall that we have already proved that $\theta = 0$ in Eq.~\eqref{eq:master-identity1}. If one sets $\alpha_r = \beta_r$, this gives the identity $AB = BA$ instead of Eq.~\eqref{eq:minor-symmetric} as we require. }
Since $D\neq 0$, we obtain Eq.~\eqref{eq:minor-symmetric}. Therefore, the determinant of every $2\times2$ submatrix of every two-index slice of $T$ vanishes.
By the standard characterization of rank-one tensors in terms of flattenings\footnote{A tensor $T$ is rank one if and only if every flattening of $T$ has rank at most one, equivalently, all $2\times2$ submatrices of all flattenings of $T$ vanish. }, this implies that $T$ is rank one, i.e.,
\begin{equation}\label{eq:T-rank-one}
t_{i_1\cdots i_{\mathtt q}}
=
u^{(1)}_{i_1}\cdots u^{(\mathtt q)}_{i_{\mathtt q}}
\end{equation}
for some vectors $u^{(r)}\in\mathbb{C}^{d_r}$.
Therefore, we get
\begin{equation}\label{eq:psi-product-final}
|\Psi\rangle
=
\Biggl( \sum_{i_1=0}^{d_1-1} u^{(1)}_{i_1}|i_1\rangle \Biggr)
\otimes \cdots \otimes
\Biggl( \sum_{i_{\mathtt q}=0}^{d_{\mathtt q}-1} u^{(\mathtt q)}_{i_{\mathtt q}}|i_{\mathtt q}\rangle \Biggr) \ ,
\end{equation}
so $|\Psi\rangle$ is fully product.
\end{proof}

\noindent
The key point of this construction is that the extremal condition $|\mathcal{Z}|=1$ turns a replica invariant into a rigid fixed-vector constraint on the replicated state, i.e., Eq.~\eqref{eq:Omega-fixed-vector-general} with $\theta = 0$.
This immediately yields polynomial relations among the coefficients of $|\Psi\rangle$, from which one shows that every two-index slice has vanishing determinant.
Hence, the coefficient tensor is forced to be rank one, so $|\Psi\rangle$ is fully product.
Since the symmetric group $S_N$ has $N!$ elements, in principle one needs at least the smallest integer $N$ such that $N! \ge\mathtt{q}$ replicas to determine whether a $\mathtt{q}$-partite state is fully product.
\\

\noindent
Several important examples are listed below:
\begin{itemize}[leftmargin=\parindent,labelsep=0.5em,itemsep=0.4em]
\item Bipartite $n$-th R\'enyi entropy: $N = n$, $\Omega_1 = \mathrm{id}$, and $\Omega_2 = (12\cdots n)$.
The corresponding replica partite function $\mathcal{Z}_{n}^{(2)}$ is real number.
\item Third-order negativity~\eqref{eq:I5-replica}: $N = 3$, $\Omega_1 = \mathrm{id}$, $\Omega_2 = (123)$, and $\Omega_3 = (132)$.
The corresponding replica partite function $\mathcal{Z}_{\Omega_{1,2,3}}^{(3)}$ is real number.
\item $\mathtt{q}$-partite $2$-nd R\'enyi multi-entropy: there are $N = 2^{\mathtt{q} - 1}$ replicas, and we label these replicas by vectors
\bea
x=(x_2,\dots,x_{\mathtt{q}})\in \mathbb{Z}_2^{\mathtt{q} - 1} \ .
\eea
For each $r=2,\dots,\mathtt{q}$, let $\Omega_r$ be the permutation operator that flips the $r$-th hypercube coordinate,
\begin{equation}\label{eq:Omega-r}
\Omega_r:\ x \mapsto x+e_r \ ,
\end{equation}
where $e_r$ is the unit vector in the $r$-th direction.
The corresponding replica partite function
\begin{equation}\label{eq:partite-function}
\mathcal{Z}_2^{(\mathtt{q})}(\Psi)
\equiv
\bigl\langle
\Psi^{\otimes N}
\big|
\Omega_2\Omega_3\cdots\Omega_{\mathtt{q}}
\big|
\Psi^{\otimes N}
\bigr\rangle
\end{equation}
is a real number. See Figure~\ref{fig:Z23} for the graphical notation of the wavefunction, reduced density matrix, and $\mathcal{Z}_2^{(3)}$ from the contractions of reduced density matrices.
\end{itemize}

\noindent
The three examples above share two features that make them especially natural: their corresponding replica partition functions are directly real-valued, and they are invariant under arbitrary permutations of the parties. 
Similar to Sec.~\ref{sec:mixed-3}, Proposition~\ref{prop:general-replica-product} can also be generalized to mixed states using the convex-roof construction.

\begin{proposition}\label{prop:qudit-sep-criterion1}
For every normalized $\mathtt{q}$-qudit mixed state, let\footnote{According to the Theorem~1 of Ref.~\cite{Eltschka:2012voz}, we assign the power $2/N$ in Eq.~\eqref{eq:E1-Z} to make sure that $E_{1 - |\mathcal{Z}_{\Omega_{1,\cdots,\mathtt{q}}}^{(\mathtt{q})}|}$ is an entanglement measure. }
\begin{equation}\label{eq:E1-Z}
E_{1 - |\mathcal{Z}_{\Omega_{1,\cdots,\mathtt{q}}}^{(\mathtt{q})}|}(\rho) \equiv \inf_{\{p_\ell,\ket{\psi_\ell}\}\in \mathbb{D(\rho)}}\sum_{\ell} p_{\ell}\Big(1 - \big|\mathcal{Z}_{\Omega_{1,\cdots,\mathtt{q}}}^{(\mathtt{q})}(\psi_{\ell})\big|\Big)^{2/N}\ ,
\end{equation}
then,
\begin{equation}\label{eq:qudit-sep-criterion}
E_{1 - |\mathcal{Z}_{\Omega_{1,\cdots,\mathtt{q}}}^{(\mathtt{q})}|}(\rho) = 0 \iff \rho\ \text{is fully separable}\ .
\end{equation}
\end{proposition}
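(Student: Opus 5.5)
The plan is to argue directly from the convex-roof definition in Eq.~\eqref{eq:E1-Z}, using Proposition~\ref{prop:general-replica-product} as the pure-state input. Write $f(\psi)\equiv\big(1-|\mathcal{Z}_{\Omega_{1,\cdots,\mathtt{q}}}^{(\mathtt{q})}(\psi)|\big)^{2/N}$. By Eq.~\eqref{eq:general-upper-bound}, $f(\psi)\ge 0$. By Proposition~\ref{prop:general-replica-product}, $f(\psi)=0$ if and only if $|\psi\rangle$ is fully product.

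\emph{Separable implies zero.} If $\rho$ is fully separable, I will write each factor $\rho^{(j)}_r$ in its spectral decomposition and expand. This gives a decomposition $\rho=\sum_\ell p_\ell|\psi_\ell\rangle\langle\psi_\ell|$ in which every $|\psi_\ell\rangle$ is fully product. For this decomposition, $\sum_\ell p_\ell f(\psi_\ell)=0$. Since $f\ge 0$, the infimum is $0$.

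\emph{Zero implies separable.} Suppose $E(\rho)=0$. The obstacle is that the infimum might not be attained by any decomposition. I will handle this with a compactness argument, in the same spirit as the argument for $E_{\phi_{ABC}}$ in Sec.~\ref{sec:phi-prop}.

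First, by Carathéodory's theorem, decompositions can be restricted to at most $D^2$ pure states, where $D=\prod_r d_r$. Each such decomposition can be parametrized by a $D\times D^2$ matrix $V$ whose columns are $\sqrt{p_\ell}\,|\psi_\ell\rangle$ and which satisfies $VV^\dagger=\rho$. This set of matrices is compact.

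Next, I will check that $f$ is continuous on the unit sphere. This holds because $\mathcal{Z}$ is a polynomial in $\Psi$ and $\Psi^*$, and $x\mapsto x^{2/N}$ is continuous on $[0,1]$. Hence the map $\sum_\ell p_\ell f(\psi_\ell)=\sum_\ell \|v_\ell\|^2 f(v_\ell/\|v_\ell\|)$ is continuous on this compact set. Columns with $v_\ell=0$ contribute zero, and near $v_\ell=0$ the term is bounded by $\|v_\ell\|^2$, so it stays continuous there. The infimum is therefore attained, and I obtain a decomposition $\rho=\sum_\ell p_\ell|\psi_\ell\rangle\langle\psi_\ell|$ with $\sum_\ell p_\ell f(\psi_\ell)=0$.

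Each term in this sum is nonnegative, so $f(\psi_\ell)=0$ whenever $p_\ell>0$. Then $|\mathcal{Z}(\psi_\ell)|=1$ for each such $\ell$. Proposition~\ref{prop:general-replica-product} makes every $|\psi_\ell\rangle$ fully product, so $\rho$ is fully separable.

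The main obstacle is the attainment of the infimum. I would handle it with the Carathéodory bound plus continuity as described, and this is the only step needing care beyond quoting Proposition~\ref{prop:general-replica-product}.
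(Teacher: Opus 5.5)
Your argument is correct and is essentially the paper's proof, which consists only of citing the bound $|\mathcal{Z}^{(\mathtt{q})}_{\Omega_{1,\cdots,\mathtt{q}}}|\le 1$ from Eq.~\eqref{eq:general-upper-bound} together with Proposition~\ref{prop:general-replica-product}. What you add is an explicit justification that the convex-roof infimum is attained: the Carath\'eodory bound on the number of terms, compactness of $\{V: VV^\dagger=\rho\}$, and continuity of $\|v\|^2 f(v/\|v\|)$. The paper leaves this step implicit, here and also in the $E_{\phi_{ABC}}$ argument of Sec.~\ref{sec:phi-prop}.
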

\begin{proof}
This follows directly from Eq.~\eqref{eq:general-upper-bound} and Proposition~\ref{prop:general-replica-product}.
\end{proof}

\begin{figure}[tbp]
\centering
\includegraphics[scale = 1]{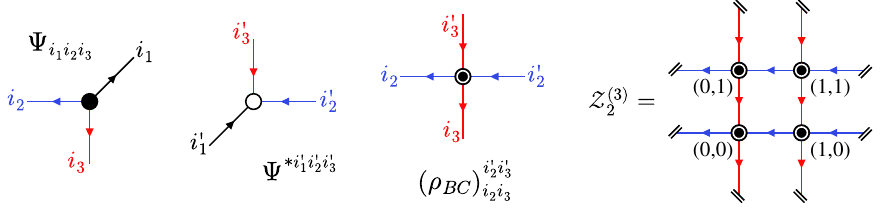}
\caption{Graphical notation for the tripartite wavefunction $\Psi$, its conjugate $\Psi^*$, the reduced density matrix $\rho_{BC}$, and the construction of $\mathcal{Z}^{(\mathtt{q})}_{2}$ in the special case $\mathtt{q} = 3$. }
\label{fig:Z23}
\end{figure}
\subsection{Subsystem Separability and Generalized GHZ-States}
\label{sec:qqudit-ghz}
\noindent
Here we generalized Proposition~1 of Ref.~\cite{Thapliyal:1998nw} to finite-dimensional $\mathtt{q}$-partite states.
Namely, for a $\mathtt{q}$-partite pure state, requiring all $(\mathtt{q}-1)$-partite reduced density matrices to be fully separable turns out to be an extremely restrictive condition, which forces the state to be LU-equivalent to a generalized GHZ-state.
We formulate this precisely in the following proposition.
\begin{proposition}\label{prop:qqudit-ghz}
Let
\begin{equation}
|\Psi\rangle \in \mathcal{H}_1 \otimes \mathcal{H}_2 \otimes \cdots \otimes \mathcal{H}_{\mathtt{q}}
\end{equation}
be a $\mathtt{q}$-qudit pure state, where all Hilbert spaces are finite-dimensional.
Suppose that for every qudit $X$, the $(\mathtt{q} - 1)$-qudit reduced density matrix
\begin{equation}
\rho_{\bar{X}} = \Tr_X |\Psi\rangle\langle\Psi|
\end{equation}
is fully separable.
Then $|\Psi\rangle$ is LU-equivalent to a rank-$R$ $\mathtt{q}$-qudit generalized GHZ-state
\begin{equation}\label{eq:qgghzr}
\ket{\widetilde{\mathrm{GHZ}}_R^{(\mathtt{q})}} = \sum_{j = 0}^{R - 1} \lambda_j|j\rangle^{\otimes \mathtt{q}}\ ,
\end{equation}
where $\sum_j |\lambda_j|^2=1$ and
\begin{equation}
R \le \min\{\dim\mathcal{H}_1,\dim\mathcal{H}_2,\dots,\dim\mathcal{H}_{\mathtt{q}}\} \ .
\end{equation}
Conversely, any state of the form~\eqref{eq:qgghzr} satisfies the above property.
\end{proposition}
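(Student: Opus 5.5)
The converse direction is immediate. Tracing out any one party of Eq.~\eqref{eq:qgghzr} gives
\begin{equation*}
\rho_{\bar X}=\sum_{j}|\lambda_j|^2\,\big(|j\rangle\langle j|\big)^{\otimes(\mathtt{q}-1)} \ ,
\end{equation*}
which is manifestly a convex mixture of fully product states. For $\mathtt{q}=2$ the forward direction is just the Schmidt decomposition, so I assume $\mathtt{q}\ge 3$.

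The plan for the forward direction is to mimic the four-qubit argument in the proof of Proposition~\ref{prop:5conditions}, with Lemma~\ref{lemma:sep} as the main tool.

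\emph{Step 1: write $|\Psi\rangle$ as a sum of product vectors.} Start from party $1$. Since $\rho_{\bar 1}$ is fully separable, it has a decomposition $\rho_{\bar 1}=\sum_k p_k|\phi_k\rangle\langle\phi_k|$ in which each $|\phi_k\rangle=|x^{(2)}_k\rangle\otimes\cdots\otimes|x^{(\mathtt q)}_k\rangle$ is fully product. The unitary freedom of purifications then gives vectors $|x^{(1)}_k\rangle\in\mathcal{H}_1$ with
\begin{equation*}
|\Psi\rangle=\sum_{k=1}^{K}|x^{(1)}_k\rangle|x^{(2)}_k\rangle\cdots|x^{(\mathtt q)}_k\rangle \ .
\end{equation*}
So $|\Psi\rangle$ has finite tensor rank. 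Among all such expressions I take one with $K$ minimal, so that no two terms are proportional in every factor and no term can be absorbed into the others.

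\emph{Step 2: show the vectors in each slot are orthogonal.} The goal is $\langle x^{(r)}_k|x^{(r)}_l\rangle=0$ for all $r$ and all $k\neq l$. Fix a party $Y$ and trace it out:
\begin{equation*}
\rho_{\bar Y}=\sum_{k,l}\langle x^{(Y)}_l|x^{(Y)}_k\rangle\,|\phi^{\bar Y}_k\rangle\langle\phi^{\bar Y}_l| \ ,
\end{equation*}
where $|\phi^{\bar Y}_k\rangle$ denotes the $k$-th product term with slot $Y$ removed. For a pair $k\neq l$, I would apply a local (SLOCC) filter on the remaining parties that projects onto the two-dimensional spans $\Span\{|x^{(r)}_k\rangle,|x^{(r)}_l\rangle\}$ and annihilates the other terms. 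Local filtering preserves full separability. Grouping the remaining parties into a bipartition $X'|Y'$, the filtered operator has exactly the two-term form~\eqref{eq:2term-lemma-state}, with off-diagonal coefficient $s\propto\langle x^{(Y)}_l|x^{(Y)}_k\rangle$. Lemma~\ref{lemma:sep} then forces $s=0$. Running this over all $Y$ and all pairs $k,l$ gives mutual orthogonality in every slot.

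This requires that, for some choice of $X'\neq Y'$ among the parties other than $Y$, the pairs $\{|x^{(X')}_k\rangle,|x^{(X')}_l\rangle\}$ and $\{|x^{(Y')}_k\rangle,|x^{(Y')}_l\rangle\}$ are each linearly independent. This is the analogue of what was used in the qubit case, and $\mathtt q\ge 3$ guarantees that two such parties are available.

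\emph{Step 3: normalize.} Once all slots are orthogonal, I normalize each $|x^{(r)}_k\rangle$, absorb the norms and phases into coefficients $\lambda_k$, and choose local unitaries sending the orthonormal $|x^{(r)}_k\rangle$ to $|k\rangle_r$. This yields Eq.~\eqref{eq:qgghzr} with $R=K$. Orthonormality in each $\mathcal{H}_r$ forces $R\le\dim\mathcal{H}_r$ for every $r$.

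\emph{Main obstacle.} The hard part is Step 2 when the decomposition has more terms than the local dimensions, so that vectors in a slot are linearly dependent or even parallel for some $k\neq l$. Then the filter that isolates two terms need not exist, and Lemma~\ref{lemma:sep} cannot be applied directly.

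My first attempt is to use minimality of $K$ to rule out the bad configurations.
\begin{itemize}
\item If two terms share a parallel factor in slot $r$, they can be merged into one term with a bipartite factor over the remaining slots.
\item I would then re-apply separability of the appropriate reduction, together with the range criterion, to that merged factor, aiming to lower $K$ and reach a contradiction.
\item If this fails, the fallback is induction on $\mathtt q$: first show that the reduction $\rho_1$ is diagonal in the basis $\{|x^{(1)}_k\rangle\}$ with rank $K$, so $K\le d_1$. Then apply the $(\mathtt q-1)$-party statement to the conditional states, using that separability of reductions is inherited under local projections.
\end{itemize}
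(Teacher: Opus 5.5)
Your Step~2 does not go through as written, and the difficulty is the one you flag under \emph{Main obstacle}: nothing in Step~1 gives you linear independence in any slot. Minimality of $K$ does not by itself make the vectors $|x^{(r)}_k\rangle$ in a slot independent, since a minimal product decomposition can have more terms than every local dimension. So the local filter that keeps terms $k,l$ and annihilates all others need not exist. Projecting onto $\Span\{|x^{(r)}_k\rangle,|x^{(r)}_l\rangle\}$ does not kill a third term unless that term is orthogonal to the span. Likewise, $\mathtt q\ge3$ only guarantees that two parties other than $Y$ exist. It does not guarantee that the pairs $\{|x^{(X')}_k\rangle,|x^{(X')}_l\rangle\}$ and $\{|x^{(Y')}_k\rangle,|x^{(Y')}_l\rangle\}$ in them are linearly independent, which is part of what must be proved. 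None of your three remedies is carried out. The induction fallback also presupposes that $\rho_1$ is diagonal in the $|x^{(1)}_k\rangle$ basis, which is exactly the orthogonality you are trying to establish.

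The missing idea, which the paper uses, is to build the decomposition from a basis of the support rather than from an arbitrary separable decomposition.
\begin{itemize}
\item \textbf{Choice of decomposition.} If $\rho_{\bar D}=\sum_k p_k|\phi_k\rangle\langle\phi_k|$ with product $|\phi_k\rangle$, its range is $\Span\{|\phi_k\rangle\}$. Extract $R=\rank\rho_{\bar D}=\rank\rho_D$ linearly independent product vectors spanning it and write $|\Psi\rangle=\sum_{j=1}^R|a_jb_jc_j\cdots\rangle|d_j\rangle$. Because the Schmidt rank across $\bar D|D$ is $R$, the $|d_j\rangle$ are automatically linearly independent.
\item \textbf{Isolating a pair.} A filter on the single party $D$, with $M_D|d_j\rangle=|0\rangle$, $M_D|d_k\rangle=|1\rangle$ and $M_D|d_l\rangle=0$ otherwise, isolates any pair of terms. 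Lemma~\ref{lemma:sep} is then applied across the cut between $D$ and the block of all untraced parties. This only needs the grouped vectors $|b_jc_j\cdots\rangle$ and $|b_kc_k\cdots\rangle$ to be independent.
\item \textbf{The dependent case.} This is handled by a chain rather than excluded in advance. $\langle a_j|a_k\rangle\neq0$ forces proportionality in every slot other than $A$ and $D$, so $\langle b_j|b_k\rangle\neq0$. Repeating the argument with $B$ traced out then forces $|a_k\rangle\propto|a_j\rangle$. The two product vectors are then proportional, contradicting their independence. This is where $\mathtt q\ge 3$ genuinely enters.
\item \textbf{Orthogonality of the $|d_j\rangle$.} Trace out $D$ and project party $A$ onto $\Span\{|a_j\rangle,|a_k\rangle\}$, which now kills all other terms because the $|a_j\rangle$ are orthogonal. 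Applying the lemma across $A|BC\cdots$ gives $\langle d_k|d_j\rangle=0$ for $j\neq k$.
\end{itemize}
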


\begin{proof}
Here we present the proof only for the four-partite case; the extension to general $\mathtt{q}$ is straightforward.
The converse is immediate.
We now prove the nontrivial direction.
\\

\noindent
Let us relabel the four parties by $A$, $B$, $C$, and $D$. Since $\rho_{ABC}$ is fully separable, its support is spanned by fully product vectors.
Let
\begin{equation}
R=\rank(\rho_{ABC})=\rank(\rho_D)\ .
\end{equation}
Choose $R$ linearly independent fully product vectors
\begin{equation}
|a_j b_j c_j\rangle
:=
|a_j\rangle\otimes|b_j\rangle\otimes|c_j\rangle\ ,
\quad
j=1,\dots,R
\end{equation}
spanning $\supp(\rho_{ABC})$.
Since $|\Psi\rangle$ is a purification of $\rho_{ABC}$ and $\rank(\rho_D)=R$, we can write it as
\begin{equation}\label{eq:4partite-decomp}
|\Psi\rangle
=
\sum_{j=1}^R
|a_j b_j c_j\rangle |d_j\rangle \ ,
\end{equation}
where the vectors
\bea
\{|d_j\rangle\}_{j=1}^R\subset\mathcal H_D
\eea
are linearly independent.
\\

\noindent
Tracing out $A$ in Eq.~\eqref{eq:4partite-decomp}, we obtain
\begin{equation}\label{eq:rhoBCD-4partite}
\rho_{BCD}
=
\sum_{j,k=1}^R
\langle a_k|a_j\rangle\,
|b_j c_j d_j\rangle\langle b_k c_k d_k| \ .
\end{equation}
Since $\rho_{BCD}$ is fully separable, it is separable across $BC|D$.
Since the vectors $\{|d_k\rangle\}_{j=1}^R$ are linearly independent, there exists a linear map
\begin{equation}
M_D:\ \mathcal H_D\to \mathbb C^2
\end{equation}
such that
\begin{equation}
M_D|d_j\rangle=|0\rangle\ ,\quad
M_D|d_k\rangle=|1\rangle\ ,\quad
M_D|d_l\rangle=0
\end{equation}
for $l\neq j,k$.
Applying the local filter $I_{BC}\otimes M_D$ to $\rho_{BCD}$, we obtain
\begin{equation}
\tilde\rho_{BCD}
=
(I_{BC}\otimes M_D)\rho_{BCD}(I_{BC}\otimes M_D^\dagger)\ ,
\end{equation}
which is still separable across $BC|D$. Using Eq.~\eqref{eq:rhoBCD-4partite}, we find
\begin{align}
\tilde\rho_{BCD}
&=
\langle a_j|a_j\rangle\,|b_j c_j\rangle\langle b_j c_j|\otimes |0\rangle\langle 0|
+
\langle a_k|a_k\rangle\,|b_k c_k\rangle\langle b_k c_k|\otimes |1\rangle\langle 1|
\nonumber\\
&\quad
+
\langle a_k|a_j\rangle\,|b_j c_j\rangle\langle b_k c_k|\otimes |0\rangle\langle 1|
+
\langle a_j|a_k\rangle\,|b_k c_k\rangle\langle b_j c_j|\otimes |1\rangle\langle 0| \ .
\end{align}
If $|b_j c_j\rangle$ and $|b_k c_k\rangle$ are linearly independent, then $\tilde\rho_{BCD}$ is of the form~\eqref{eq:2term-lemma-state} across $BC|D$.
Hence, by Lemma~\ref{lemma:sep},
\begin{equation}\label{eq:ai-nonzero-implies-bc-factors-short}
\langle a_j|a_k\rangle\neq 0
\quad\Longrightarrow\quad
|b_k c_k\rangle \propto |b_j c_j\rangle
\quad\Longrightarrow\quad
|b_k\rangle\propto |b_j\rangle\ ,\ |c_k\rangle\propto |c_j\rangle \ .
\end{equation}
Cyclically permuting $A,B,C$, we likewise obtain
\begin{equation}\label{eq:bi-nonzero-implies-ac-factors-short}
\langle b_j|b_k\rangle\neq 0
\quad\Longrightarrow\quad
|c_k\rangle\propto |c_j\rangle \ ,\ |a_k\rangle\propto |a_j\rangle\ ,
\end{equation}
and
\begin{equation}\label{eq:ci-nonzero-implies-ab-factors-short}
\langle c_j|c_k\rangle\neq 0
\quad\Longrightarrow\quad
|a_k\rangle\propto |a_j\rangle\ ,\ |b_k\rangle\propto |b_j\rangle \ .
\end{equation}
\\

\noindent
We now show that $\{|a_j\rangle\}$, $\{|b_j\rangle\}$, and $\{|c_j\rangle\}$ are pairwise orthogonal.
If $\langle a_j|a_k\rangle\neq 0$, then Eq.~\eqref{eq:ai-nonzero-implies-bc-factors-short} implies $|b_k\rangle\propto|b_j\rangle$ and $|c_k\rangle\propto|c_j\rangle$, thus $\langle b_j|b_k\rangle\neq 0$, and then Eq.~\eqref{eq:bi-nonzero-implies-ac-factors-short} gives $|a_k\rangle\propto|a_j\rangle$. Thus,
\bea
|a_k b_k c_k\rangle \propto |a_j b_j c_j\rangle \ ,
\eea
contradicting the linear independence of $\{|a_j b_j c_j\rangle\}_{j=1}^R$.
By the same logic, starting from
$\langle b_j|b_k\rangle\neq 0$ or $\langle c_j|c_k\rangle\neq 0$, we also get
$\langle b_j|b_k\rangle=\langle c_j|c_k\rangle=0$ for $j\neq k$.
Therefore, we get
\begin{equation}\label{eq:abc-orthogonal}
\langle a_j|a_k\rangle=\langle b_j|b_k\rangle=\langle c_j|c_k\rangle=0
\qquad
(j\neq k) \ .
\end{equation}
Tracing out $D$ in Eq.~\eqref{eq:4partite-decomp}, we get
\begin{equation}
\rho_{ABC}
=
\sum_{j,k=1}^R
\langle d_k|d_j\rangle\,|a_j b_j c_j\rangle\langle a_k b_k c_k| \ .
\end{equation}
Since $\rho_{ABC}$ is fully separable, it is separable across $A|BC$.
For any $j\neq k$, projecting $\rho_{ABC}$ onto
\begin{equation}
\Span\{|a_j\rangle,|a_k\rangle\}\otimes \Span\{|b_j c_j\rangle,|b_k c_k\rangle\}
\end{equation}
and using Lemma~\ref{lemma:sep}, we find
\begin{equation}\label{eq:d-orthogonal-short}
\langle d_k|d_j\rangle=0
\qquad
(k\neq j) \ .
\end{equation}
In summary, each of the four families
\begin{equation}\label{eq:4pairs}
\{|a_j\rangle\}_{j=1}^R\ ,\quad
\{|b_j\rangle\}_{j=1}^R\ ,\quad
\{|c_j\rangle\}_{j=1}^R\ ,\quad
\{|d_j\rangle\}_{j=1}^R
\end{equation}
is pairwise orthogonal.
Therefore, there exist local unitaries sending them to the standard basis, and
\begin{equation}
(U_A\otimes U_B\otimes U_C\otimes U_D)|\Psi\rangle
=
\sum_{j=0}^{R-1}\lambda_j |jjjj\rangle \ ,
\end{equation}
with $\sum_{j=0}^{R-1}|\lambda_j|^2=1$. This is precisely the generalized GHZ form.
\end{proof} 

\noindent
Because the local QFT cannot have a separable state for the spatial partition, the generalization of the qudit state proof already implies that the local QFT for the spatial partition does not allow a GHZ-type state if the qudit state can approach the QFT. 
The GHZ-type state can only explore the mode-wise entanglement, such as in Lifshitz's vacuum state \cite{Basak:2023otu}. 
We will demonstrate the computability via third-order negativity in CFT$_2$ in the next section. 
\section{Application in CFT}
\label{sec:3rd-nega-CFT}
\noindent
In this section, we consider the third-order negativity in CFT$_2$. 
Take a Cauchy slice and choose three ordered points $z_1$, $z_2$, and $z_3$. 
They divide the slice into three adjacent intervals:
\begin{equation}
A = (-\infty, z_1]\cup[z_3,+\infty)\ ,\quad B = [z_1,z_2]\ ,\quad C = [z_2,z_3]\ .
\end{equation}
We now study the third-order negativity associated with the bipartition $B|C$.
\\

\noindent
The replica construction of the third-order negativity~\eqref{eq:I5-replica} is represented by a three-point function of twist operators,
\begin{equation}\label{eq:neg3-3pt}
\Tr(\rho_{BC}^{\Gamma})^3
=
\big\langle
\sigma_{g_1}(z_1,\bar{z}_1)\,
\sigma_{g_2}(z_2,\bar{z}_2)\,
\sigma_{g_3}(z_3,\bar{z}_3)
\big\rangle \ ,
\end{equation}
where the permutation elements $g_j \in S_3$ are determined by the gluing conditions in the replica manifold. 
For the third-order negativity, each insertion is associated with a $3$-cycle, so the conformal dimensions of the three twist operators are identical~\cite{Lunin:2000yv}, 
\begin{equation}\label{eq:twist-dim}
\Delta_1 = \Delta_2 = \Delta_3 = \frac{c}{12}\Big(3 - \frac{1}{3}\Big)\ ,
\end{equation}
where $c$ is the central charge. 
The three-point function of primaries is fixed by conformal symmetry up to the OPE coefficient~\cite{Ferrara:1973yt,Mack:1975jr},
\begin{equation}\label{eq:3pt-general}
\big\langle
\sigma_1(z_1,\bar{z}_1)\,
\sigma_2(z_2,\bar{z}_2)\,
\sigma_3(z_3,\bar{z}_3)
\big\rangle
=
\frac{C_{123}}
{|z_{12}/\epsilon|^{\Delta_1+\Delta_2-\Delta_3}
 |z_{23}/\epsilon|^{\Delta_2+\Delta_3-\Delta_1}
 |z_{31}/\epsilon|^{\Delta_3+\Delta_1-\Delta_2}} \ ,
\end{equation}
where $z_{jk}:=z_j-z_k$, and $\epsilon$ is the UV cutoff. 
\\

\noindent
In a large-$c$ CFT, the OPE coefficient of the heavy twist operators can be obtained from the AdS/CFT correspondence~\cite{Maldacena:1997re,Gubser:1998bc,Witten:1998qj}. When $\Delta_1$, $\Delta_2$, and $\Delta_3$ satisfy the triangle inequalities, which is beyond the CFT computable region, the holographic method provides a consistent result for the OPE coefficient~\cite{Chang:2016ftb}
\begin{equation}\label{eq:OPE-general}
\begin{split}
\ln C_{123} =&\, \frac{1}{2}\Delta_1\ln\bigg(\frac{(\Delta_1 + \Delta_2 - \Delta_3)(\Delta_1 + \Delta_3 - \Delta_2)}{\Delta_2 + \Delta_3 - \Delta_1}\bigg) + \text{(2\ perm.)}\\
&\,+ \frac{1}{2}\big({\textstyle \sum}_j\Delta_j\big)\big(\ln({\textstyle \sum}_j\Delta_j) - \ln 4\big) - {\textstyle \sum}_j\Delta_j\ln\Delta_j\ .
\end{split}
\end{equation}
Substituting Eq.~\eqref{eq:twist-dim} into Eq.~\eqref{eq:OPE-general} gives
\begin{equation}\label{eq:logC123}
\ln C_{123} = \frac{c}{3}\ln \frac{3}{4}\ .
\end{equation}
Combining Eqs.~\eqref{eq:3pt-general},~\eqref{eq:twist-dim}, and~\eqref{eq:logC123}, we arrive at
\begin{equation}
\ln\Tr(\rho_{BC}^\Gamma)^3 = -\frac{c}{9}\ln\bigg(\frac{|z_1 - z_2|^2|z_2 - z_3|^2|z_3 - z_1|^2}{\epsilon^6}\bigg) + \frac{c}{3}\ln\frac{3}{4}\ .
\end{equation}
This gives the universal large-$c$ result for the third-order negativity of two adjacent intervals in the vacuum state.
Thus, multipartite measures are computable quantities in CFT rather than merely formal definitions.

\section{Discussion and Conclusion}
\label{sec:discussion}
\noindent
This paper studied multipartite entanglement through the interplay between separability criteria and entanglement measures.
We showed that the third-order negativity $I_5$~\eqref{eq:I5-intro} satisfies
\bea
I_5=1
\eea
if and only if a tripartite pure state is a fully product state.
Its convex-roof extension~\cite{Eisert:2001xyp} then provides a necessary and sufficient criterion for full separability of tripartite mixed states. This criterion naturally applies to four-qubit pure states and yields four ``$\phi$-type'' tripartite measures.
Together with the six two-tangles~\cite{Bennett:1996gf,Hill:1997pfa,Wootters:1997id}, four three-tangles~\cite{Coffman:1999jd}, and four quadripartite measures, these quantities form a complete set of eighteen independent entanglement measures, matching the number of LU-invariant degrees of freedom of a four-qubit pure state.
For three-qubit pure states, the upper bound of $\phi_{ABC}$ and the lower bound of $I_5$ are shown to be saturated by the GHZ- and W-states, respectively, thereby interpolating between separable states and distinct classes of entangled states.
We further proved that the four-tangle~\cite{Uhlmann:2000ckp,Wong:2000cmz} is the only entanglement measure that can remain nonzero while all other seventeen vanish, uniquely characterizing generalized GHZ-type four-qubit states. Extending beyond qubits, we developed a replica-based construction of separability criteria for general finite-dimensional multipartite systems, showing that a $\mathtt{q}$-partite pure state is LU-equivalent to a generalized GHZ-state if and only if all of its ($\mathtt{q}-1$)-partite reduced density matrices are fully separable.
This establishes a direct link between multipartite separability and global entanglement structure.
Finally, we discussed applications to QFT by computing the third-order negativity in CFT$_2$, highlighting the connection between multipartite entanglement measures and field-theoretic observables.
\\

\noindent
There are several promising directions for future investigation: Although the convex-roof constructions used in this work are conceptually natural, they are generally difficult to evaluate in practice. Developing analytic formulae or efficient numerical methods for entanglement measures such as $\phi_{ABC}$~\cite{Oreshkov:2006gqe} would significantly enhance the applicability of the separability criteria.
\\

\noindent
Following Table~\ref{tab:hierarchy-measures}, one might expect that for a five-qubit pure state, there are $\binom{5}{2} = 10$ bipartite, $2\times\binom{5}{3} = 20$ tripartite, and $4\times\binom{5}{4} = 20$ quadripartite entanglement measures.
However, this already exceeds the number of independent parameters of a five-qubit pure state~\cite{Carteret:2000jop}:
\begin{equation}
10 + 20 + 20 > 47\ .
\end{equation}
A similar mismatch occurs for six qubits,
\begin{equation}
\binom{6}{2} + 2\times\binom{6}{3} + 4\times\binom{6}{4} > 2^{6 + 1} - 2 - 3\times 6\ .
\end{equation}
This count indicates that the four-qubit system is the last case in which the naive hierarchical organization remains compatible with parameter independence.
For five qubits and beyond, multipartite measures must satisfy nontrivial constraints and hidden relations. Identifying these relations and extracting an independent minimal set of measures is an important open problem.
\\

\noindent
In this work, almost all four-qubit entanglement measures are expressed in terms of reduced density matrices, with the notable exception of the hyperdeterminant.
Our results indicate that the hyperdeterminant is nonvanishing only for specific classes of genuinely multipartite entangled states.
It would therefore be valuable to express the hyperdeterminant directly in terms of reduced density matrices, as this could provide deeper insight into the structure of four-qubit entanglement.
\\

\noindent
More generally, many-body systems exhibit qualitatively different types of entanglement. It is therefore important to develop measures—such as $\phi_{ABC}$—that can detect how a separable state transitions into distinct classes of entangled states.
Since any valid entanglement measure must vanish on separable states, the study of separability provides a natural guiding principle for constructing such measures.
Moreover, our framework applies to finite-dimensional Hilbert spaces.
It is directly accessible to current experimental platforms, particularly in the four-qubit pure-state setting, which represents the simplest realization of tripartite mixed states.
\\

\noindent
Finally, it is important to explore how these ideas extend to QFT.
A common approach is to approximate QFT by qudit systems in the large local-dimension limit.
However, our results suggest a limitation of this correspondence: generalized GHZ-type qudit states are separable under all reduced density matrices, whereas local QFT does not admit strictly separable states for spatial subregions~\cite{Witten:2018zxz}.
This indicates that not all entanglement structures realized in qudit systems are compatible with local QFT.
It would be interesting to characterize which classes of states remain consistent with the intrinsic entanglement structure of QFT.
This question is closely related to the fact that, while qudit systems admit a standard partial trace, local QFT lacks a straightforward tensor-factor decomposition for spatial regions~\cite{Witten:2018zxz}.
Consequently, qudit systems may more naturally capture mode-wise entanglement, as exemplified by the vacuum structure of Lifshitz theories~\cite{Basak:2023otu}, rather than spatial entanglement in local QFT.

\section*{Acknowledgments}
\noindent 
We want to express our gratitude to Chong-Sun Chu, Ruihua Fan, Jiayu Ran, Peng Tan, and Huangjun Zhu for their helpful discussion. 
CTM thanks Nan-Peng Ma for his encouragement. 
\appendix
\section{\boldmath Lower Bound of $I_5$\unboldmath}
\label{appx:I5-lower}
\begin{proposition}\label{prop:I5-lower-bound}
For any three-qubit pure state, $I_5\ge 2/9$, and the lower bound is saturated if and only if the state is LU-equivalent to the $W$ state $(|100\rangle+|010\rangle+|001\rangle)\sqrt{3}$.
\end{proposition}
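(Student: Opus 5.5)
The plan is to reduce $I_5$ to a low-degree expression in two-qubit Bloch data and then minimize that expression over the canonical form~\eqref{eq:Acin}.

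\textbf{Step 1: a trace identity.} First I would establish, for three-qubit pure states,
\begin{equation*}
I_5=3\Tr\big[(\rho_A\otimes\rho_B)\rho_{AB}\big]-\Tr(\rho_A^3)-\Tr(\rho_B^3)\ .
\end{equation*}
By the permutation symmetry of the replica expression~\eqref{eq:I5-replica}, we may work with any pair, so I use $AB$. To prove the identity, I would expand $\langle\Psi^{\otimes 3}|\Omega_A\Omega_B\Omega_C|\Psi^{\otimes3}\rangle$. For qubits, each $3$-cycle acting on $(\mathbb{C}^2)^{\otimes 3}$ can be rewritten through $S_3$ characters. Equivalently, since $\rho_C$ has rank at most $2$, Cayley--Hamilton-type relations for $2\times2$ matrices let one trade the $3$-cycle on one party for transpositions and the identity. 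If this bookkeeping becomes messy, I would instead verify the identity directly on the canonical form~\eqref{eq:Acin}, since both sides are LU invariants.

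\textbf{Step 2: Bloch form.} Next, write
\begin{equation*}
\rho_{AB}=\tfrac14\Big(I+\vec a\cdot\vec\sigma\otimes I+I\otimes\vec b\cdot\vec\sigma+\sum_{ij}T_{ij}\,\sigma_i\otimes\sigma_j\Big)\ .
\end{equation*}
Then $\Tr[(\rho_A\otimes\rho_B)\rho_{AB}]=\tfrac14(1+|\vec a|^2+|\vec b|^2+\vec a^{\,\mT}T\vec b)$ and $\Tr\rho_A^3=\tfrac14(1+3|\vec a|^2)$. Substituting into the identity of Step 1, the local terms cancel and
\begin{equation*}
I_5=\frac{1+3\,\vec a^{\,\mT}T\,\vec b}{4}\ .
\end{equation*}
The claim $I_5\ge 2/9$ is therefore equivalent to $\vec a^{\,\mT}T\vec b\ge -1/27$. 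As a check on the W-state: $\vec a=\vec b=(0,0,1/3)$ and $T_{zz}=-1/3$, so the product is exactly $-1/27$ and $I_5=2/9$.

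\textbf{Step 3: minimization (main obstacle).} Proving $\vec a^{\,\mT}T\vec b\ge-1/27$ is where the work lies, because the constraint that $\rho_{AB}$ is the marginal of a pure three-qubit state is essential: generic two-qubit states can make the product more negative.

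My first attempt would parametrize by~\eqref{eq:Acin} and compute $\vec a$, $\vec b$, $T$ explicitly as quadratic polynomials in $\lambda_0,\dots,\lambda_4$ and $\cos\varphi$. One then minimizes the resulting sextic subject to $\sum_j\lambda_j^2=1$.

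I expect the phase to enter only through a term like $\lambda_0\lambda_1\lambda_4\cos\varphi$-type contributions, which is extremal at $\varphi=0$ or $\varphi=\pi/2$ boundaries. This would reduce the problem to the real case. There I would use Lagrange multipliers, or better, an AM--GM bound. The natural variables are the populations $x=\lambda_2^2$, $y=\lambda_3^2$, $z=\lambda_0^2$ (or the appropriate triple after an LU relabeling sending the W-state to $|100\rangle+|101\rangle+|110\rangle$-like support). The product then looks like $-xyz$ plus nonnegative corrections, and $xyz\le(1/3)^3$ gives $-1/27$.

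An alternative route, if the direct computation stalls, is to use the Schmidt-like freedom to rotate $\vec a$ and $\vec b$ onto the $z$-axis. One then bounds $|T_{zz}|$ via the purity relation $\Tr\rho_{AB}^2=\Tr\rho_C^2$ and the rank-$2$ constraint on $\rho_{AB}$.

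\textbf{Step 4: equality.} For the equality case, I would track when every inequality used in Step 3 is tight. This requires the correction terms to vanish, which forces $\lambda_4=\lambda_1=0$ (killing the three-tangle and the GHZ-like components), and the three remaining weights to be equal. That is precisely the canonical form of the W-state, so the lower bound is attained exactly on the W LU-orbit.
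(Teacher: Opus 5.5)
\textbf{There is a genuine gap in Step 3, and it carries into Step 4.}

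Your Steps 1--2 are correct. They are a different and cleaner starting point than the paper, which substitutes the canonical form directly into $I_5$, sets $\cos\varphi=-1$, and minimizes the resulting polynomial in the squared amplitudes by a boundary-face analysis. Your identity has a short proof. In Bloch form, $16\Tr[X^3]=1+3P+6\,\vec a^{\mT}T\vec b\mp 6\det T$ for $X=\rho_{AB}$ and $X=\rho_{AB}^{\Gamma}$ respectively, where $P=|\vec a|^2+|\vec b|^2+\|T\|^2=4\Tr\rho_{AB}^2-1$. Since $\rank\rho_{AB}\le 2$ gives $\Tr\rho_{AB}^3=\tfrac12(3\Tr\rho_{AB}^2-1)$, eliminating $\det T$ yields $I_5=\tfrac14(1+3\,\vec a^{\mT}T\vec b)$.

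Step 3 is the entire content of the proposition, and you do not carry it out. Worse, the mechanism you propose is false. For the canonical form one finds
\[ I_5=\lambda_0^6+(1-\lambda_0^2)^3+3\lambda_0^4\lambda_1^2+3\lambda_0^2(\lambda_1^2+\lambda_2^2)(\lambda_1^2+\lambda_3^2)-3(1-\lambda_0^2)\big|\lambda_1e^{i\varphi}\lambda_4-\lambda_2\lambda_3\big|^2 . \]
So the phase enters as $6(1-\lambda_0^2)\lambda_1\lambda_2\lambda_3\lambda_4\cos\varphi$, and the worst case is $e^{i\varphi}=-1$, not $\varphi\in\{0,\pi/2\}$.

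Now take $e^{i\varphi}=-1$, $\lambda_0^2=\tfrac13$, $\lambda_1^2=\epsilon$, $\lambda_4^2=4\epsilon$, and $\lambda_2^2=\lambda_3^2=\tfrac13-\tfrac52\epsilon$. Then $I_5=\tfrac29+\tfrac74\epsilon^2$ exactly, so $\vec a^{\mT}T\vec b=-\tfrac1{27}+\tfrac73\epsilon^2$. Meanwhile $-\lambda_0^2\lambda_2^2\lambda_3^2=-\tfrac1{27}+\tfrac59\epsilon+O(\epsilon^2)$. Your ``nonnegative corrections'' therefore equal $-\tfrac59\epsilon+O(\epsilon^2)<0$. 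No other triple of canonical populations rescues the AM--GM step, because every such triple has product at most $\tfrac1{27}-\tfrac59\epsilon+O(\epsilon^2)$ at this point.

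The same curve defeats Step 4. Here $I_5-\tfrac29$ is only of second order while $\lambda_1,\lambda_4\neq0$, so equality cannot follow from ``corrections vanish $\Rightarrow \lambda_1=\lambda_4=0$''.

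\textbf{How to close it.} Your remark that generic two-qubit states can push $\vec a^{\mT}T\vec b$ below $-\tfrac1{27}$ is wrong. The pure-state constraint is needed only for the identity of Step 1. Your ``alternative route'' works with no purity or rank input, provided you prove $T_{zz}\ge|\vec a|+|\vec b|-1$ rather than bounding $|T_{zz}|$.

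Rotate locally so that $\vec a=\alpha\hat z$ and $\vec b=\beta\hat z$ with $\alpha,\beta\ge0$. Let $p_{jk}=\langle jk|\rho_{AB}|jk\rangle$. Then $\alpha+\beta=2(p_{00}-p_{11})$ and $1+T_{zz}=2(p_{00}+p_{11})$, so $T_{zz}=\alpha+\beta-1+4p_{11}$. With $\sigma=\alpha+\beta$,
\[ \vec a^{\mT}T\vec b=\alpha\beta\,T_{zz}\ \ge\ \alpha\beta(\sigma-1)\ \ge\ \min\big\{0,\tfrac14\sigma^2(\sigma-1)\big\}\ \ge\ -\tfrac1{27} . \]

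Equality forces $\alpha=\beta=\tfrac13$ and $p_{11}=0$, hence $p_{00}=p_{01}=p_{10}=\tfrac13$. Write $|\Psi\rangle=\sum_{jk}|jk\rangle_{AB}|\gamma_{jk}\rangle_C$. Then $\gamma_{11}=0$. Since $\vec a$ and $\vec b$ have no transverse components in this frame, $\langle\gamma_{10}|\gamma_{00}\rangle=\langle\gamma_{01}|\gamma_{00}\rangle=0$. In $\mathbb{C}^2$ this forces $\gamma_{01}\parallel\gamma_{10}$, so $|\Psi\rangle$ is LU-equivalent to the W-state.

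Completed this way, your route proves the proposition in a few lines and replaces the paper's case analysis altogether. Incidentally, that case analysis never treats the face $y=s/2$ (i.e.\ $\lambda_2=\lambda_3$) of its domain. Yet $\partial_s\mathcal{I}_5>0$ pushes every minimizer with $\lambda_1\lambda_4\neq0$ onto that face.
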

\begin{proof}
Any three-qubit pure state is LU-equivalent to the following canonical form~\cite{Acin:2000jx,Brun:2001gyy}
\begin{equation}\label{eq:3-canonical}
|\psi\rangle
=
\lambda_0 |000\rangle
+\lambda_1 e^{i\varphi}|100\rangle
+\lambda_2 |101\rangle
+\lambda_3 |110\rangle
+\lambda_4 |111\rangle \ ,
\end{equation}
with $\lambda_i\ge 0$, $0\le \varphi <\pi$, and $\sum_{j=0}^4 \lambda_j^2=1$. Let
\begin{equation}\label{eq:abcde}
a=\lambda_0^2\ ,\quad
b=\lambda_1^2\ ,\quad
c=\lambda_2^2\ ,\quad
d=\lambda_3^2\ ,\quad
e=\lambda_4^2\ ,
\end{equation}
so that $a,b,c,d,e\ge 0$ and $a+b+c+d+e=1$. 
Substituting Eq.~\eqref{eq:3-canonical} directly into the definition of $I_5$~\eqref{eq:I5-intro} gives
\begin{equation}\label{eq:I5-acin-full}
\begin{aligned}
I_5
=&\,
a^3+b^3+c^3+d^3+e^3\\
&+3(a^2b+ab^2+b^2c+bc^2+b^2d+bd^2+c^2e+ce^2+d^2e+de^2)\\
&+3(abc+abd+acd+bcd+bce+bde+cde)
\\
&+6\sqrt{bcde}\,(b+c+d+e)\cos\varphi \ .
\end{aligned}
\end{equation}
Since we have $b+c+d+e=1-a\ge 0$, the last term in Eq.~\eqref{eq:I5-acin-full} is minimized, at fixed $a,b,c,d,e$, by taking $\cos\varphi=-1$.
Now define
\begin{equation}\label{eq:sy}
s:=c+d\ ,\quad y:=\sqrt{cd}\ ,
\end{equation}
so that
\begin{equation}
0\le y\le \frac{s}{2}\ ,
\quad
e=1-a-b-s \ .
\end{equation}
Using Eq.~\eqref{eq:sy}, one finds after simplification
\begin{equation}\label{eq:I5-asy}
I_5
\ge
1+3(a + b)^2+3bs-3a-3b
+3(2a-1)y^2
-6(1-a)\sqrt{be}\,y \ .
\end{equation}
Thus, the problem is reduced to minimizing
\begin{equation}
\mathcal{I}_5 \equiv 1+3(a + b)^2+3bs-3a-3b
+3(2a-1)y^2
-6(1-a)\sqrt{be}\,y
\end{equation}
in the compact domain:
\begin{equation}\label{eq:domain-asy}
a,b,s,y\ge 0\ ,\quad
a+b+s\le 1\ ,\quad
0\le y\le \frac{s}{2}\ .
\end{equation}
\\

\noindent
We first exclude interior critical points.
Assuming
\begin{equation}
a>0\ ,\quad b>0\ ,\quad s>0\ ,\quad e>0\ ,\quad 0<y<\frac{s}{2}\ ,
\end{equation}
we then get
\begin{equation}
\frac{\partial \mathcal{I}_5}{\partial s}
=
3b+\frac{3(1-a)b}{\sqrt{be}}y> 0 \ .
\end{equation}
Therefore, there is no interior critical point.
Hence, the global minimum must lie on the boundary of the domain~\eqref{eq:domain-asy}.
We now analyze the boundary faces.
\paragraph{\boldmath (i) $y=0$.\unboldmath}
Then Eq.~\eqref{eq:I5-asy} reduces to
\begin{equation}
\mathcal{I}_5
=
1+3(a+b)^2-3(a+b)+3bs \ge 3(a+b)^2-3(a+b)+1\ge \frac{1}{4}>\frac{2}{9}\ .
\end{equation}
\paragraph{\boldmath (ii) $a=0$.\unboldmath}
Then Eq.~\eqref{eq:I5-asy} becomes
\begin{equation}
\mathcal{I}_5
=
1+3b^2+3bs-3b-3y^2-6\sqrt{be}\,y \ .
\end{equation}
Since the coefficients of both $y^2$ and $y$ are negative, the minimum occurs at the largest allowed value $y=s/2$.
Hence, we get
\begin{equation}
\mathcal{I}_5
=
1+3b^2+3bs-3b-\frac{3}{4} s^2-3s\sqrt{be} \ .
\end{equation}
Using
\bea
\sqrt{be}\le \frac{b+e}{2} = \frac{1-s}{2} \ ,
\eea
we then get
\begin{equation}
\mathcal{I}_5
\ge
1+3b^2+3bs-3b-\frac{3}{4} s^2-\frac{3}{2} s(1-s)=
\frac{1}{4}+3\left(b-\frac{1-s}{2}\right)^2 \ge \frac{1}{4}>\frac{2}{9} \ .
\end{equation}
\paragraph{\boldmath (iii) $b=0$.\unboldmath}
Then Eq.~\eqref{eq:I5-asy} gives
\begin{equation}
\mathcal{I}_5
=
1-3a+3a^2+3(2a-1)y^2 \ .
\end{equation}
If $a\ge 1/2$, then $2a-1\ge 0$, so the minimum occurs at $y=0$, and hence we get
\begin{equation}
\mathcal{I}_5\ge 1-3a+3a^2\ge \frac{1}{4} \ .
\end{equation}
If $a\le 1/2$, then $2a-1\le 0$, so the minimum occurs at the largest allowed value $y=s/2$. Since also $s\le 1-a$, the minimum is attained at $s=1-a$.
Therefore, we obtain
\begin{equation}
\mathcal{I}_5
\ge
1-3a+3a^2+\frac{3}{4}(2a-1)(1-a)^2
=
\frac{6a^3-3a^2+1}{4} \ .
\end{equation}
On $0\le a\le 1/2$, the RHS is minimized at $a=1/3$, giving
\begin{equation}
\mathcal{I}_5\ge \frac{2}{9} \ .
\end{equation}
Equality holds if and only if
\bea
a=\frac{1}{3}\ , \quad s=\frac{2}{3}\ , \quad y=\frac{1}{3} \ ,
\eea
which implies
\begin{equation}\label{eq:W-condition}
a=c=d=\frac{1}{3}\ ,\quad b=e=0 \ .
\end{equation}
\paragraph{\boldmath (iv) $e=0$.\unboldmath}
Then Eq.~\eqref{eq:I5-asy} reduces to
\begin{equation}
\mathcal{I}_5
=
1-3as+3(2a-1)y^2 \ .
\end{equation}
If $a\ge 1/2$, then the minimum is at $y=0$, and since $s\le 1-a$,
\begin{equation}
\mathcal{I}_5\ge 1-3a(1-a)\ge \frac{1}{4} \ .
\end{equation}
If $a\le 1/2$, then the minimum is at $y=s/2$, and again $s\le 1-a$, so
\begin{equation}
\mathcal{I}_5
\ge
1-3a(1-a)+\frac{3}{4}(2a-1)(1-a)^2
=
\frac{6a^3-3a^2+1}{4}\ge \frac{2}{9} \ .
\end{equation}
Equality holds if and only if
\bea
a=\frac{1}{3}\ , \quad s=\frac{2}{3}\ , \quad y=\frac{1}{3} \ ,
\eea
which again implies Eq.~\eqref{eq:W-condition}.
\paragraph{\boldmath (v) $s=0$.\unboldmath}
Then necessarily $y=0$, so this is contained in case (i).
\\

\noindent
Combining all boundary cases, we conclude that
\begin{equation}
I_5\ge \frac{2}{9} \ .
\end{equation}
Moreover, equality can occur only in cases (iii) and (iv), namely Eq.~\eqref{eq:W-condition}.
This is precisely the W-state in the canonical form.
Therefore, the lower bound is saturated if and only if the state is LU-equivalent to the W-state.
\end{proof}
\section{\boldmath Upper Bound of $\phi_{ABC}$\unboldmath}
\label{appx:phi-upper}
\begin{proposition}\label{prop:phi-upper-bound}
For any three-qubit pure state, $\phi_{ABC}\le 99/2$, and the upper bound is saturated if and only if the state is LU-equivalent to the GHZ-state
$(|000\rangle+|111\rangle)/\sqrt{2}$. 
\end{proposition}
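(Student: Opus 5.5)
We plan to prove the equivalent statement in terms of purities. Eq.~\eqref{eq:phiABC-m} gives, for three-qubit pure states,
\begin{equation*}
\phi_{ABC}=93-12I_5-27\big[\Tr(\rho_A^2)+\Tr(\rho_B^2)+\Tr(\rho_C^2)\big]\ ,
\end{equation*}
so $\phi_{ABC}\le 99/2$ is equivalent to
\begin{equation*}
\mathcal{F}\equiv 12I_5+27\big[\Tr(\rho_A^2)+\Tr(\rho_B^2)+\Tr(\rho_C^2)\big]\ \ge\ \frac{87}{2}\ .
\end{equation*}
For the GHZ-state we have $I_5=1/4$ and every one-qubit purity equals $1/2$, so $\mathcal{F}=3+81/2=87/2$, and the bound is attained there.

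The two pieces cannot be bounded separately. The purities satisfy $\Tr(\rho_X^2)\ge 1/2$, but $I_5$ only satisfies $I_5\ge 2/9$ (Proposition~\ref{prop:I5-lower-bound}), which is strictly below $1/4$. At the W-state $\mathcal{F}=8/3+45>87/2$. So the inequality has to come from a joint minimization, and for this we follow the method of Appendix~\ref{appx:I5-lower}.

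First, we put the state in the canonical form~\eqref{eq:3-canonical} and use the variables $a,\dots,e$ of Eq.~\eqref{eq:abcde}. Here $I_5$ is given by Eq.~\eqref{eq:I5-acin-full}. We then compute the three one-qubit purities directly:
\begin{itemize}
\item $\Tr(\rho_A^2)=a^2+(1-a)^2+2ab$;
\item $\Tr(\rho_B^2)$ and $\Tr(\rho_C^2)$ are polynomials in $a,\dots,e$, plus terms of the form $\sqrt{bcde}\cos\varphi$ that come from the off-diagonal elements. For example, $\langle 0|\rho_B|1\rangle=\lambda_1\lambda_3e^{-i\varphi}+\lambda_2\lambda_4$.
\end{itemize}
Collecting everything, $\mathcal{F}$ is a polynomial in $a,\dots,e$ plus $K\sqrt{bcde}\cos\varphi$ with a coefficient $K\ge0$. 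At fixed moduli, $\mathcal{F}$ is therefore minimized at $\cos\varphi=-1$.

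Next, we introduce $s=c+d$ and $y=\sqrt{cd}$ as in Eq.~\eqref{eq:sy}. This yields a function $\mathcal{F}(a,b,s,y)$ on the compact domain~\eqref{eq:domain-asy}, involving the single non-polynomial term $\sqrt{be}\,y$ with $e=1-a-b-s$. We then exclude interior critical points by a monotonicity argument, for instance by showing that $\partial_s\mathcal{F}$ or $\partial_y\mathcal{F}$ has a definite sign, exactly as was done for $\partial_s\mathcal{I}_5$. This reduces the problem to the boundary faces $y=0$, $a=0$, $b=0$, $e=0$ and $s=0$ (the last contained in $y=0$).

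On each face the function becomes a low-degree polynomial in one or two variables. On faces where $\sqrt{be}$ survives, we use $\sqrt{be}\le(b+e)/2$. We then complete squares to show $\mathcal{F}\ge 87/2$. We expect the minimum to sit on the face $b=c=d=0$ (so $y=s=0$), where $\mathcal{F}$ becomes a function of $a$ alone with $e=1-a$. There
\begin{itemize}
\item $I_5=a^3+(1-a)^3$,
\item $\Tr(\rho_A^2)=\Tr(\rho_B^2)=\Tr(\rho_C^2)=a^2+(1-a)^2$,
\end{itemize}
and the minimum is at $a=1/2$, which is the GHZ-state.

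The main obstacle is the boundary analysis, specifically the faces $a=0$ and $e=0$, which contain W-like configurations. On those faces the negative $y^2$ and $y\sqrt{be}$ terms compete with the purity terms, and the gap to $87/2$ is small; at the W-state it is $47\tfrac{2}{3}-43\tfrac12$. There we would first push $y$ to its extreme value $s/2$ and then reduce to a one-variable cubic, as in cases~(iii)--(iv) of Appendix~\ref{appx:I5-lower}.

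Finally, we track when equality holds. It should force $b=c=d=0$ and $a=e=1/2$ in the canonical form, i.e.\ $(|000\rangle+|111\rangle)/\sqrt2$. This gives the uniqueness of the GHZ-state up to LU equivalence.
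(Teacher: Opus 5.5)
\textbf{There is a genuine gap in the reduction step.}

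Up to bookkeeping you are on the paper's route. Your $\mathcal{F}$ is exactly $93-\phi_{ABC}$, your $K$ is $72(4-a)$, and after $\cos\varphi=-1$ the dependence on $c,d$ is only through $s$ and $y$. With $e=1-a-b-s$,
\[
\mathcal{F}=93-18\Big[11a(1-a)-11ab+8be-3as+(8-4a)y^2+4(4-a)\sqrt{be}\,y\Big],
\]
so $\partial_y\mathcal{F}\le 0$, and $\partial_s\mathcal{F}>0$ at fixed $(a,b,y)$. Either monotonicity pushes you onto the face $y=s/2$, i.e.\ $c=d$. That face is missing from the list of faces you import from Appendix~\ref{appx:I5-lower}.

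This face is not a side case. For every $(a,b,s)$ the minimum over $y$ sits on it, which is why the paper sets $c=d=s/2$ at the outset. On it, $\mathcal{F}=93-18\,\Phi(a,b,s)$, with $\Phi$ the three-variable function of Eq.~\eqref{eq:Phi-abs} on the whole simplex $a+b+s+e=1$. So your claim that every face yields a polynomial in one or two variables is false. Moreover, the faces you do list ($a=0$, $b=0$, $e=0$, $s=0$), restricted to $y=s/2$, only cover the boundary of that simplex.

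The substance of the paper's proof is what happens in the open simplex. There none of $\partial_a\Phi$, $\partial_b\Phi$, $\partial_s\Phi$ has a definite sign; for instance $\partial_s\Phi<0$ near the GHZ point but $\partial_s\Phi>0$ near $a\approx 0$, $b=0.1$, $s=0.5$. No monotonicity argument of the kind you propose can therefore work. The paper instead substitutes $r=\sqrt{b/e}$ and turns the stationarity conditions into the polynomial system \eqref{eq:pda}--\eqref{eq:pds}. It then eliminates variables down to the resultant \eqref{eq:r-resultant} and checks that each positive root gives either a boundary point or no admissible point.

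Your proposed tools cannot replace this step. Take the estimate $\sqrt{be}\le(b+e)/2$ on this face at $a=\tfrac12$, $b\to0^+$. It only gives $\Phi\le\tfrac{11}{4}+\tfrac{s}{4}-2s^2$, which exceeds $\tfrac{11}{4}$ for $0<s<\tfrac18$, right next to the GHZ maximizer. So you need a separate argument that $\Phi<\tfrac{11}{4}$ throughout the open simplex, such as the paper's critical-point elimination. Only after that do the boundary analysis and the equality case $a=e=\tfrac12$, $b=c=d=0$ close the proof.
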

\begin{proof}
Using the conventions in Eqs.~\eqref{eq:3-canonical} and~\eqref{eq:abcde}, a direct substitution into the definition of $\phi_{ABC}$~\eqref{eq:phiABC} gives
\begin{equation}\label{eq:phi-abc-acin}
\frac{\phi_{ABC}}{18}
=
11a(1-a)
-11ab
+8be
-3a(c+d)
+(8 - 4a)cd
-4(4-a)\sqrt{bcde}\cos\varphi\ .
\end{equation}
Since $4-a>0$, the RHS is maximized, at fixed $a,b,c,d,e$, by taking $\cos\varphi=-1$. 
Hence, we get
\begin{equation}\label{eq:phi-phase-max}
\frac{\phi_{ABC}}{18}
\le
11a(1-a)
-11ab
+8be
-3a(c+d)
+(8 - 4a)cd
+4(4-a)\sqrt{bcde}\ .
\end{equation}
Next fix $a,b,e$ and $s:=c+d$. The dependence on $c,d$ in the RHS of Eq.~\eqref{eq:phi-phase-max} is
\begin{equation}
-3as + (8-4a)cd + 4(4-a)\sqrt{becd} \ .
\end{equation}
Because $0\le a\le 1$, both coefficients $(8-4a)$ and $4(4-a)\sqrt{be}$ are positive, so this is increasing in $cd$. 
Since $cd\le s^2/4$ with the equality held at $c=d=s/2$, the maximum is attained when $c=d=s/2$. 
Therefore, it suffices to maximize
\begin{equation}\label{eq:Phi-abs}
\Phi(a,b,s)\equiv
11a(1-a)
-11ab
+8be
-3as
+(2-a)s^2
+2(4-a)s\sqrt{be} \ ,
\end{equation}
where now
\begin{equation}
a + b + s + e = 1\ ,
\quad
a,b,s,e\ge 0 \ .
\end{equation}
\\

\noindent
We now analyze the boundary faces of the simplex.
\paragraph{\boldmath (i) $a=0$.\unboldmath}
Then
\begin{equation}
\Phi(0,b,s)=2s^2+8be+8s\sqrt{be} \ ,
\quad b+s+e=1 \ .
\end{equation}
Using $be\le (1-s)^2/4$ and $\sqrt{be}\le (1-s)/2$, we get
\begin{equation}
\Phi(0,b,s)\le 2s^2+2(1-s)^2+4s(1-s)=2 \ .
\end{equation}
\paragraph{\boldmath (ii) $e=0$.\unboldmath}
Then $a+b+s=1$ and
\begin{equation}
\Phi(a,b,s)=11a(1-a)-11ab-3as+(2-a)s^2
=2s^2+8as-as^2\ .
\end{equation}
For fixed $s$, this is linear in $a$, with positive coefficient $s(8-s)$ on $[0,1]$, so the maximum occurs at $a=1-s$, i.e. $b=0$. 
Hence, we get
\begin{equation}
\Phi\le s^3-7s^2+8s \le \frac{68}{27} \ .
\end{equation}
\paragraph{\boldmath (iii) $b=0$.\unboldmath}
Then
\begin{equation}
\Phi(a,0,s)=11a(1-a)-3as+(2-a)s^2 \ .
\end{equation}
For fixed $a$, this is convex in $s$, since $2-a\ge 1>0$. 
Hence, the maximum on $0\le s\le 1-a$ occurs at an endpoint:
\begin{align}
\Phi(a,0,0)&=11a(1-a)\le \frac{11}{4} \ ,\\
\Phi(a,0,1-a)&=(1-a)(a^2+5a+2)\le \frac{68}{27} \ .
\end{align}
Thus, we obtain
\begin{equation}
\max_{b=0}\Phi=\frac{11}{4} \ .
\end{equation}
\paragraph{\boldmath (iv) $s=0$.\unboldmath}
Then
\begin{equation}
\Phi(a,b,0)=11a(1-a)-11ab+8b(1-a-b) \ .
\end{equation}
For fixed $a$, this is concave in $b$, and its maximum is:
\begin{equation}
\max_b \Phi(a,b,0)=
\begin{cases}
\dfrac{(8+3a)^2}{32} \ , & 0\le a\le \dfrac{8}{19}\ ,\\[6pt]
11a(1-a) \ , & \dfrac{8}{19}\le a\le 1\ .
\end{cases}
\end{equation}
Therefore, we obtain
\begin{equation}
\max_{s=0}\Phi=\frac{11}{4} \ ,
\end{equation}
attained at $a=1/2$, $b=0$. 
Hence, every boundary face satisfies
\begin{equation}\label{eq:boundary-bound}
\Phi\le \frac{11}{4} \ ,
\end{equation}
and equality can only occur at
\begin{equation}
a=\frac{1}{2}\ ,\quad b=0\ ,\quad s=0\ ,\quad e=\frac{1}{2}\ .
\end{equation}
\\

\noindent
It remains to exclude interior critical points. 
For fixed $a,s$, one computes
\begin{equation}\label{eq:Phi-b-concave}
\frac{\partial^2 \Phi}{\partial b^2}
=
-16-\frac{(4-a)s(1-a-s)^2}{2[b(1-a-s-b)]^{3/2}}<0
\qquad
(0<b<1-a-s)\ .
\end{equation}
Thus, for fixed $(a,s)$, the function is strictly concave in $b$, so any interior critical point is unique in the $b$-direction. 
Now assume:
\begin{equation}
a>0\ ,\quad b>0\ ,\quad s>0\ ,\quad e>0\ .
\end{equation}
Let $r:=\sqrt{b/e}>0$, so that
\begin{equation}
b=\frac{(1-a-s)r^2}{1+r^2}\ ,
\end{equation}
Rewrite $\Phi(a,b,s)$ in terms of $a,s,r$, which we denote as $\tilde{\Phi}(a,r,s)$. 
Then the stationarity equations $\partial_a\tilde{\Phi}=\partial_r\tilde{\Phi}=\partial_s\tilde{\Phi}=0$ become polynomial equations in $(a,r,s)$:
\begin{align}
\partial_a \tilde{\Phi} = 0\ \Rightarrow\ &(1-r)^2(1+r^2)s^2 + \big((5+6a)r^2 + 22a - 11\big)\nonumber\\ 
&\qquad - (8r^4 + (4a - 10)r^3 + 21r^2 + (4a - 10)r - 3)s=0\ ,
\label{eq:pda}\\
\partial_r \tilde{\Phi} = 0\ \Rightarrow\ &\big((4 - a)r^4 - 8r^3 + 8r - (4 - a)\big)s + (3a+8)r^3 + 19ar - 8r = 0\ ,
\label{eq:pdr}\\
\partial_s \tilde{\Phi} = 0\ \Rightarrow\ &\big(4(1 - r)^4 - 2a(1-r)^2(1+r^2)\big)s + 8r(1-r)^2\nonumber\\
&\qquad+ 2a^2 r(1+r^2) + a(8r^4 - 10r^3 + 21r^2 - 10r - 3) = 0\ .\label{eq:pds}
\end{align}
Eliminating $s$ from Eqs.~\eqref{eq:pdr} and~\eqref{eq:pds} yields
\begin{equation}\label{eq:q-ar}
2a^2r(r+1)+a(8r^3-16r^2+17r-3)-32r^3+76r^2-56r+12=0 \ .
\end{equation}
Eliminating $s$ from Eqs.~\eqref{eq:pda} and~\eqref{eq:pdr}, and then eliminating $a$ between the resulting equation and Eq.~\eqref{eq:q-ar}, one obtains the resultant
\begin{equation}\label{eq:r-resultant}
(r-1)^4(r+1)(2r-1)(4r-7)(4r^2-4r+3)(5r^3-r^2+21r-17)^2=0 \ .
\end{equation}
Hence, the only positive candidates are:
\begin{equation}
r=\frac{1}{2}\ ,\quad r=1\ ,\quad r=\frac{7}{4}\ ,
\quad
r=r_*\ ,
\end{equation}
where $r_*$ is the unique positive root of
\begin{equation}
5r^3-r^2+21r-17=0 \ .
\end{equation}
Substituting these candidates back into the stationarity equations~\eqref{eq:pda}-\eqref{eq:pds} gives:
\begin{itemize}
\item $r=1/2$ implies $(a,s)=(1/3,2/3)$, thus $1 - a - s=0$, i.e. a boundary point;
\item $r=1$ forces $a=0$, again a boundary point;
\item $r=7/4$ yields no admissible point in the simplex;
\item $r=r_*$ yields no real solution with $a,b,s,e>0$.
\end{itemize}
Therefore, there is no interior critical point.
\\

\noindent
Since the domain is compact, the global maximum of $\Phi$ must lie on the boundary. 
By Eq.~\eqref{eq:boundary-bound}, it follows that $\Phi\le \frac{11}{4}$, thus we conclude
\begin{equation}
\phi_{ABC}\le \frac{99}{2} \ .
\end{equation}
Equality holds precisely at
\begin{equation}
a=e=\frac{1}{2}\ ,\quad b=c=d=0\ ,
\end{equation}
namely for the GHZ-state.
\end{proof}
\bibliographystyle{JHEP}
\bibliography{biblio}
\end{document}